\newcommand{\E}{\mathbb{E}}
\DeclareMathOperator{\Var}{Var}
\DeclareMathOperator{\Cov}{Cov}
\newcommand{\Ind}{\mathbbm{1}}
\newcommand{\reals}{\mathbb{R}}
\DeclareMathOperator{\expit}{expit}
\DeclareMathOperator{\logit}{logit}
\newcommand{\X}{\mathfrak{X}}
\newcommand\indep{\protect\mathpalette{\protect\independenT}{\perp}}
\def\independenT#1#2{\mathrel{\rlap{$#1#2$}\mkern2mu{#1#2}}}
\theoremstyle{plain}
\newtheorem{lem}{Lemma}[section]
\newtheorem{thm}[lem]{Theorem}
\newtheorem{prop}[lem]{Proposition}
\newtheorem{cor}[lem]{Corollary}
\theoremstyle{definition}
\newtheorem{dfn}[lem]{Definition}
\newtheorem{rmk}[lem]{Remark}
\newtheorem{exm}[lem]{Example}
\newcommand{\benum}{\begin{enumerate}}
\newcommand{\eenum}{\end{enumerate}}
\newcommand{\bitem}{\begin{itemize}}
\newcommand{\eitem}{\end{itemize}}
\newcommand{\barr}{\begin{array}}
\newcommand{\earr}{\end{array}}
\newcommand{\bmat}{\begin{pmatrix}}
\newcommand{\emat}{\end{pmatrix}}
\newcommand{\blist}{\renewcommand{\labelenumi}{\textbf{\arabic{enumi}}.} \begin{enumerate}}
\newcommand{\elist}{\end{enumerate} \renewcommand{\labelenumi}{\arabic{enumi}.}}
\newcommand{\bs}{\boldsymbol}
\def\bal#1\eal{\begin{align*}#1\end{align*}}
\newcounter{exmrun}
\newenvironment{exmp}{\refstepcounter{exmrun}\noindent
   \textbf{Example~R\theexmrun.} \rmfamily}{}
\title{Parameterizing and Simulating from Causal Models}
\author{Robin J.\ Evans\thanks{Department of Statistics, University of Oxford, UK}~ and Vanessa Didelez\thanks{Leibniz Institute for Prevention Research and Epidemiology - BIPS and Faculty of Mathematics \& Computer Science, University of Bremen, Germany}}
\DeclareMathOperator{\Cor}{Cor}
\newcommand{\Do}{\mathit{do}}
\renewcommand{\Ind}{\mathbb{I}}
\newcommand{\M}{\mathcal{M}}
\renewcommand{\X}{\mathcal{X}}
\newcommand{\Y}{\mathcal{Y}}
\newcommand{\Z}{\mathcal{Z}}
\newcommand{\cmid}{\,|\,}
\newcommand{\Xt}{A}
\newcommand{\xt}{a}
\newcommand{\Zc}{L}
\newcommand{\zc}{\ell}
\newcommand{\YZIX}{{Y\hspace{-1pt}Z \hspace{-.5pt} | \hspace{-.5pt}X}}
\newcommand{\YZIXC}{{Y\hspace{-1pt}Z \hspace{-.5pt} | \hspace{-.5pt}X\hspace{-.5pt}C}}
\newcommand{\YIXZ}{{Y \hspace{-.5pt} | \hspace{-.5pt}X \hspace{-1pt}Z}}
\newcommand{\YIZX}{{Y \hspace{-.5pt} | \hspace{-.5pt}Z \hspace{-1pt}X}}
\newcommand{\ZXY}{{\hspace{-.5pt}Z \hspace{-1pt} X \hspace{-.75pt} Y}}
\newcommand{\ZX}{{\hspace{-.5pt} Z \hspace{-1pt} X}}
\newcommand{\YIZ}{Y \hspace{-.5pt} | \hspace{-.75pt} Z}
\newcommand{\YZ}{Y \hspace{-1pt} Z}
\newcommand{\YIX}{{Y\hspace{-.5pt}|\hspace{-.5pt} X}}
\newcommand{\ZIX}{{Z \hspace{-.5pt}|\hspace{-.5pt} X}}
\newcommand{\XIZ}{{\hspace{-.75pt} X \hspace{-.5pt}|\hspace{-.5pt} Z}}
\newcommand{\ALB}{{\hspace{-1pt}A\hspace{-.5pt}L\hspace{-.5pt}B}}
\newcommand{\ALBY}{{\hspace{-1pt}A\hspace{-.5pt}L\hspace{-.5pt}BY}}
\newcommand{\YIALB}{{Y\hspace{-.65pt}|\hspace{-.5pt}A\hspace{-.5pt}L\hspace{-.5pt}B}}
\newcommand{\YIAB}{{Y\hspace{-.65pt}|\hspace{-.5pt}A\hspace{-.75pt}B}}
\newcommand{\pZXY}{p_{\ZXY}}
\newcommand{\pYZX}{p_{\YZIX}}
\newcommand{\pX}{p_{\hspace{-1pt}X\hspace{-.5pt}}}
\newcommand{\pZ}{p_{\hspace{-.75pt}Z\hspace{-.5pt}}}
\newcommand{\pYIXZ}{p_{\YIXZ}}
\newcommand{\pYIZX}{p_{\YIZX}}
\newcommand{\pYIZ}{p_{\YIZ}}
\newcommand{\pZX}{p_{\ZX}}
\newcommand{\pZIX}{p_{\ZIX}}
\newcommand{\pYX}{p_{\YIX}}
\newcommand{\phiYZIX}{\phi_{\YZIX}}
\newcommand{\phiYZ}{\phi_{\YZ}}
\newcommand{\thetaZX}{\theta_{\ZX}}
\newcommand{\thetaYX}{\theta_{\YIX}}
\newcommand{\KL}{\mathrm{KL}}
\begin{document}


\maketitle

\addtocounter{footnote}{2}

\begin{abstract}
Many statistical problems in causal inference involve a probability
distribution other than the one from which
data are actually observed; as an additional complication, the
object of interest is often a marginal quantity of this other probability
distribution.  This creates many practical complications for
statistical inference, even where the problem is non-parametrically
identified.
%
In particular, it is difficult to perform likelihood-based inference,
or even to simulate from the model in a general way.

We introduce the `frugal parameterization', which places the causal
effect of interest at its centre, and then builds the rest of the model
around it.  We do this in a way that
provides a recipe for constructing a regular, non-redundant parameterization
using causal quantities of interest.
In the case of discrete variables we can use odds ratios to complete
the parameterization, while in the continuous case copulas are the natural choice;
other possibilities are also discussed.

Our methods allow us to construct and simulate from models with parametrically
specified causal distributions, and fit them using
likelihood-based methods, including fully Bayesian approaches.  
Our proposal includes parameterizations for the average causal effect and effect of  treatment on the treated, as well as other causal quantities of interest.
%
\end{abstract}

%

\section{Introduction}

In many multivariate statistical problems, inferential interest lies in
properties of specific functionals of the joint distribution, such as
marginal or conditional distributions; this means it is generally
desirable to specify a model for these functionals directly, with other
parts of the distribution often being regarded as nuisance parameters.  In
\emph{causal} inference problems, the target of inference may be a
probability distribution other than the one that generates the observed
data, but one which corresponds to some sort of experimental
intervention on that system.

\begin{exm} \label{exm:xyz}
Consider the causal system
represented by the graph in Figure \ref{fig:mod}(a), and suppose we are
interested in the causal effect of $X$ on $Y$.  
For example, in a cohort of children $X$ might be a measure of their diet, 
$Y$ their BMI, and $Z$
an indicator of the education level of their parents.  
Alternatively, $Z$ could be an unobserved genetic factor.


This can be
formulated as a prediction problem: ``what would happen if we
performed an experiment in which we set $X = x$
by external intervention?''
Let the variables be distributed according to $P$
with some density $p$.  Under
the causal DAG assumptions of \citet{spirtes00} and \citet{pearl:09}, the conditional
distribution of $Y$ and $Z$ after an experiment to fix $X = x$ is
\begin{align*}
P^*(Z=z, Y=y \mid X=x) \equiv P(Z=z) \cdot P(Y = y \mid Z=z, X=x).
\end{align*}
Note that the idealized intervention on $X$ removes any
dependence of $X$ on the confounder $Z$, but preserves the
marginal distribution of $Z$, and the conditional distribution of
$Y$ given $X, Z$.  This distribution is Markov with respect to the
graph in Figure \ref{fig:mod}(b). %
Interest may then lie in the marginal effect on just $Y$,
\begin{align}
P^*(Y=y \mid X=x) = \sum_z P(Z=z) \cdot P(Y = y \mid Z=z, X=x), \label{eqn:qp}
\end{align}
sometimes denoted $P(Y=y \mid \Do(X=x))$, or as the distribution of the
\emph{potential outcome} $Y_x$.
Models of this quantity are known as \emph{marginal structural models}
\citep[or MSMs,][]{robins:00}.

\begin{figure}
  \begin{center}
  \begin{tikzpicture}
  [rv/.style={circle, draw, thick, minimum size=7mm, inner sep=0.75mm}, node distance=20mm, >=stealth]
  \pgfsetarrows{latex-latex};
\begin{scope}
  \node[rv]  (1)              {$Z$};
  \node[rv, below of=1, yshift=7.5mm, xshift=-10mm] (2) {$X$};
  \node[rv, right of=2] (3) {$Y$};
\node[left of=1, yshift=-5mm] {(a)};

  \draw[->, very thick, color=blue] (1) -- (3);
  \draw[->, very thick, color=blue] (1) -- (2);
  \draw[->, very thick, color=blue] (2) -- (3);
  \end{scope}
  \begin{scope}[xshift=7cm]
  \node[rv]  (1)              {$Z$};
  \node[rv, rectangle, below of=1, yshift=7.5mm, xshift=-10mm] (2) {$X$};
  \node[rv, right of=2] (3) {$Y$};
  \node[left of=1, yshift=-5mm] {(b)};

  \draw[->, very thick, color=blue] (1) -- (3);
  \draw[->, very thick, color=blue] (2) -- (3);
  \end{scope}

    \end{tikzpicture}
 \caption{(a) A causal model with three variables; (b) the same model after intervening on $X$.}
  \label{fig:mod}
  \end{center}

\end{figure}
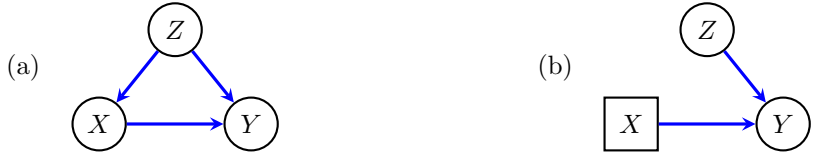

For the purposes of simulation and likelihood-based inference it is often
necessary to work with the joint distribution $P(X=x,Y=y,Z=z)$ directly, and
it may be difficult to specify it so that it remains compatible with a particular
marginal model on (\ref{eqn:qp}).  Indeed, providing a model for the joint distribution
parametrically may lead to a situation in which (\ref{eqn:qp}) cannot logically
be independent of the value of $x$, unless we impose the much stronger condition
that $Y \indep X \mid Z$.
More generally, specifying separate models for joint and marginal
quantities---and ignoring the information that is shared between
them---can lead to incompatible or incoherent
models, non-regular estimators, and severe misspecification problems.
\end{exm}

\subsection{Contribution of this Paper}

We will show that one can break down a joint distribution
into three pieces: the distribution of `the past',
$\pZX(z,x) := P(Z=z, X=x)$; the causal quantity of interest,
$\pYX^*(y \cmid x) := P^*(Y=y \cmid X=x)$; and a conditional
odds ratio, copula or other dependence measure $\phiYZIX^*$
between $Y$ and $Z$ given $X$.  
Suppose that the respective parameterizations
for these quantities are called $\thetaZX$, $\thetaYX^*$ and
(with some abuse of notation) $\phiYZIX^*$; we call
$(\thetaZX, \thetaYX^*, \phiYZIX^*)$ a
\emph{frugal parameterization}.  The terminology is chosen because
it is a direct parameterization of the causal quantity of interest, 
such that there is no redundancy and \emph{any} distribution with
a positive joint density can be decomposed in this manner. 
If we use smooth and regular\footnote{That is, such that the model is differentiable
in quadratic mean and has positive definite Fisher Information Matrix.
See Appendix \ref{sec:reg} for a formal statement.}
parameterizations of the three pieces then the resulting
parameterization of the joint model is also smooth and regular.
We use a star (e.g.\ $p^*$ or $\phi^*$)
to denote that the distribution or parameter is from the
\emph{causal} or \emph{interventional} distribution, and omit
the star if the distribution or parameter is from the
\emph{observational} regime.
Note that the causal quantity $\pYX^*$ 
may be more general than just $\pYX(y \cmid \Do(x))$; see
Section \ref{sec:prereq}.  

%

Note that, in addition to providing a parameterization, the quantities
$\thetaZX$ and $\thetaYX^*$ will always be \emph{variation independent};
we can also always choose $\phiYZIX^*$ to be variation independent of the
other two parameters, unless we prefer to use (e.g.) a risk difference or 
risk ratio for interpretability.
As an example of the benefits of this property, 
we add a dependence for $Y$ on covariates $C$ via a link function:
\begin{align*}
  \logit P^*(Y=1 \cmid X=x, C=c) &= \mu + \alpha x +  \bs \beta c +   \bs \gamma x c, && \text{for all } c.
\end{align*}
Now we can be certain that---regardless of the values of $P(X=x,Z=z,C=c)$ and
$\phi_{\YZIXC}^*(y, z \cmid x,c)$---there is a coherent joint distribution
which possesses the required functionals.  This could allow us, for example, to
model the causal effect of alcohol ($X$) on blood pressure ($Y$) conditional on
a person's genes ($C$), but marginally over factors such as socio-economic
status ($Z$).

We start with a very simple example, to illustrate exactly what we
propose to do.

\begin{exm} \label{exm:gauss}
Suppose that $(Z,X,Y)^T$ follow a multivariate Gaussian distribution
with zero mean, and that we wish to specify that $Y \cmid \Do(X=x)$
is normal with mean $\beta x$ and variance $\sigma^2$.
To complete the frugal parameterization
we must specify `the past' (i.e.\ $\pZX$) and a dependence measure
between $Y$ and $Z$ conditional upon $X$ ($\phiYZIX^*$).
We therefore take $Z$ and $X$ to be normal with mean 0 and variances
$\tau^2,\upsilon^2$ respectively and correlation $\rho$,
and assume the regression parameter for $Y$ on $Z$ (in the regression that includes $X$) is $\alpha$;
note that we could alternatively specify the covariance or partial
correlation between $Z$ and $Y$.  Hence we have $\theta_{ZX} = (\tau^2, \upsilon^2, \rho)$, $\thetaYX = (\beta, \sigma^2)$ and $\phiYZIX^* = \alpha$.
Using this information, one can directly compute the distribution of
$(Z,Y)^T$ after the intervention
\begin{align*}
\left( \begin{matrix} Z \\ Y\end{matrix} \middle) \right| \Do(X=x) \sim N_2 \left(
\left(\begin{matrix} 0 \\ \beta x \end{matrix}\right), \;
\left(
\begin{matrix} \tau^2 & \alpha \tau^2 \\
\alpha \tau^2 &  \sigma^2
\end{matrix}
\right) \right),
\end{align*}
and consequently the observational joint distribution of $(Z,X,Y)^T$ is:
\begin{align*}
\left( \begin{matrix} Z \\ X \\ Y\end{matrix} \right) \sim N_3 \left(0, \; \left(
\begin{matrix} \tau^2 & \rho\tau\upsilon & \alpha \tau^2 + \beta \rho \tau \upsilon \\
\rho\tau\upsilon & \upsilon^2 & \beta \upsilon^2 + \alpha \rho \tau \upsilon\\
\alpha \tau^2 + \beta \rho \tau \upsilon & \beta \upsilon^2 + \alpha \rho \tau \upsilon &  \sigma^2 + \beta^2\upsilon^2 + 2\rho\tau\upsilon\alpha\beta
\end{matrix}
\right) \right).
\end{align*}
%
We may do this for any value of $\rho \in (-1,1), \alpha, \beta$, and
$\sigma^2,\tau^2,\upsilon^2 > 0$ provided that $\sigma^2 > \alpha^2 \tau^2$,
and indeed we can obtain
\emph{any} trivariate Gaussian distribution from these parameters.  Note 
that, though the last inequality implies there is variation dependence 
in this case, we could easily 
choose $\phiYZIX^*$ to be (for example) the partial correlation between 
$Z$ and $Y$ given $X$, and then there would be no such constraint.

%

Once we are able to construct the joint distribution, simulation is trivial.  We
take the Cholesky decomposition of the covariance matrix and apply the lower triangular part
to independent standard normals.  Likelihood-based inference is also straightforward
once the covariance is known.

In this example we took our three pieces, $\pZX$ (a bivariate normal),
$\pYX^*$ (a linear regression) and
$\phiYZIX^*$ (a regression parameter), and used them to obtain $\pZXY$.
Note that our parameterization was
chosen so that every quantity of interest is specified precisely once,
and the overall model is saturated (i.e.\ any multivariate Gaussian
distribution can be
deconstructed in this manner, just by varying the parameters).
This contrasts with the alternative of  specifying $\Sigma$ directly,
as this does not give a simple explicit model for the causal effect.
\end{exm}

The above example may seem somewhat trivial, but the main contribution
of this paper is that we will do this in a much more general fashion, enabling
simulation from a wide range of causal models.

\begin{exm} \label{exm:binYZ}
Now suppose that $Z$ and $Y$ are binary with $X$ still continuous, and
we continue to work with the model in Figure \ref{fig:mod}(a).  This time we
specify that
\begin{align*}
\logit \E[Y \cmid \Do(X=x)] = \beta_0 + \beta_1 x;
\end{align*}
in addition suppose $\E Z = q$, that $X \mid Z=z \sim N(\gamma z, \sigma^2)$,
and that the log odds ratio between $Y$ and $Z$ conditional on $X=x$ is
$\phi$ (we could also allow $\phi$ to vary with $x$).

The joint distribution in this example is considerably more difficult
to write in a closed form than the one in Example \ref{exm:gauss}.
However, in this paper we will show that we may:
(i) specify this model using the parameters just given;
(ii) simulate samples from the distribution described; and
(iii) give a map to numerically evaluate the joint density and fit such
a model to data using likelihood-based methods.  Furthermore,
we can do all this (almost) as easily as with the multivariate
Gaussian distribution.  Note that because logistic regression is
not collapsible, this model illustrates why we should not just provide
$p_{\YIXZ}$ to compute the joint distribution: doing so could lead
to a very different marginal model for $Y \cmid \Do(X)$ than the one we
chose.
\end{exm}

As we show, the method is particularly applicable to survival models
and dynamic treatment models where we marginalize over the time-varying
confounders; both of these are widely used but are
difficult to simulate from \citep{havercroft12, young:14}.  In addition,
it allows Bayesian and other likelihood-based methods to be applied
coherently to marginal causal models \citep{Saarela:15}.
%

Though Examples \ref{exm:xyz}--\ref{exm:binYZ} are presented for
univariate Gaussian or discrete variables, in fact
the results are entirely general and can be adapted to vectors of arbitrary
cardinality and general continuous or mixed variables; implementation does
become more complicated in such situations, however.  As noted by \citet{robins:00},
calculation of the likelihood becomes a `computational nightmare' for
marginal structural models with continuous variables, but we show that
copulas can be used to overcome this problem.  In the
sequel we denote the observational joint density by $p$ with, for example,
$\pYX(y \cmid x)$ meaning the conditional density of $Y$ given $X$.  In
the discrete case, this is just the probability mass function.

\subsection{Existing Work} \label{sec:existing}

A commonly used alternative to likelihood-based approaches are
generalized estimating equations (GEEs) or semiparametric methods,
as these do not require full
specification of the joint distribution \citep{diggle02}.  However,
neither method allows for simulation from the model, and they may be
less powerful than likelihood-based methods.

\citet{robins:92} provides an algorithm for simulating from a
\emph{Structural Nested Accelerated Failure Time Model} (SNAFTM), a survival model in
which one models survival time as an exponential variable whose
parameter varies with treatment.  This is adapted by \citet{young:08}
to simulate from a Cox MSM model.  \citet{young:09} consider a
special case of a Cox MSM that approximates a SNAFTM and also a SNCFTM
(special cases of the \emph{structural nested model}---see Section \ref{sec:snm}).  
\citet{keogh21simulating} give a method for simulating from Cox MSMs using 
an additive hazard model.
\citet{havercroft12} consider the problem of specifying (and thus characterizing) models such that, for simulation and educational purposes, bias due to selection effects and blocking mediation effects will be strong if a na\"ive approach is used.

\citet{richardson:17:oddsproduct} give a variation independent
parameterization for structural equation models by using the odds product;
this also allows for fully-likelihood based methods.
This is extended by \citet{wang:22} to the
Structural Nested Mean Model, which we will meet in Section \ref{sec:snm}.
The main difference between this work and ours is that it is not obvious how
to extend their approach to other models and to continuous variables.

Indeed, much of the trend in causal inference is towards structural equations
models (SEMs) in which each variable is modelled as a function of all previous variables
and a stochastic noise term \citep{peters:17}.  In Example \ref{exm:binYZ} this would
have meant specifying $p_{\YIZX}$, which would not have allowed us to
directly model $\pYX(y \cmid \Do(x))$.  In particular, the work of
Pearl generally assumes that causal distributions should be
conditional on all previous variables, while allowing for some
conditional independence constraints (see, for example, \citealp{peters:17},
and large sections of \citealp{pearl:09}).  We certainly do not wish to single these authors out for criticism (indeed
the authors of this paper have often considered such approaches), but
they do seem to be less useful in epidemiological or other medical
contexts, in which conditional independences are often---though not
always---implausible assumptions.  In such a context, one has to specify
distributions conditional on the entire past, which may be very difficult
if there are a large number of relevant variables.

We view our approach as complementary to the structural equation perspective,
since each has advantages in terms of what assumptions can be expressed and 
the causal questions that can be easily answered within the framework.  SEMs 
and the theory around them have received much attention; this work starts to
fill in the gaps relating to marginal models. 


\subsection{Causal Models}

%
%

Throughout the paper we will have a running example based on Figure
\ref{fig:mod2}; each of these examples is labelled with a prefix `R'.

\begin{exmp}  \label{exm:run}
The model in Figure \ref{fig:mod2} arises in
dynamic treatment models and is studied in \citet{havercroft12}.   The
variables $A$ and $B$ represent two treatments and so play the role of $X$ 
from Example \ref{exm:xyz}; 
the second treatment $B$ depends on both the first ($A$) and an intermediate outcome $L$.  The
variable $U$ is `hidden' or latent, and therefore identifiable quantities
are functions of $p_{\ALBY}$.  A typical quantity of interest is
the distribution of the outcome $Y$ after interventions on
the two treatments $A$ and $B$.  Under the assumption of positivity and the 
causal structure implied by the graph, this is identified by the g-formula
of \citet{robins86} as
\begin{align}
p_{\YIAB}(y \cmid \Do(a,b)) &:= \int p_{\YIALB}(y \cmid a,\ell,b) \cdot p_{L|A}(\ell \cmid a) \, d\ell. \label{eqn:do}
\end{align}
Havercroft and Didelez
note that after specifying a model for $p_{\YIAB}(y \cmid \Do(a, b))$, it
is difficult to parameterize and simulate from the full joint
distribution, partly because of the complexity of the relationship (\ref{eqn:do}).
They are only able to simulate from the special case of Figure \ref{fig:mod2}
in which $L$ has no direct effect on $Y$, so any dependence
is entirely due to the latent variable.  We remark that we could replace instances 
of $\ell$ in
(\ref{eqn:do}) with $(u, \ell)$ and obtain the same result, which means that the
role of $Z$ could be taken by either $L$ alone or the pair $(U,L)$.
\end{exmp}

\begin{figure}
  \begin{center}
  \begin{tikzpicture}
  [rv/.style={circle, draw, thick, minimum size=6mm, inner sep=0.75mm}, node distance=20mm, >=stealth]
  \pgfsetarrows{latex-latex};
\begin{scope}
  \node[rv]  (1)              {$A$};
  \node[rv, right of=1] (2) {$L$};
  \node[rv, right of=2] (3) {$B$};
  \node[rv, right of=3] (4) {$Y$};
  \node[rv, above of=3, color=red, yshift=-7.5mm] (U) {$U$};


\draw[->, very thick, color=blue] (1) -- (2);
  \draw[->, very thick, color=blue] (2) -- (3);
  \draw[->, very thick, color=blue] (3) -- (4);
\draw[->, very thick, color=blue] (1.315) to[bend right] (4.215);
  \draw[->, very thick, color=blue] (1) to[bend right] (3);
  \draw[->, very thick, color=blue] (2) to[bend left] (4);
  \draw[->, very thick, color=red] (U) -- (2.60);
  \draw[->, very thick, color=red] (U) -- (4.120);

  \end{scope}
    \end{tikzpicture}
 \caption{The causal model from \citet{havercroft12}.
}
  \label{fig:mod2}
  \end{center}
\end{figure}

For related reasons, the model in Figure \ref{fig:mod2} is also the subject of the so-called
\emph{g-null paradox} \citep{robins:97} when testing the hypothesis of whether $p_{\YIAB}(y \cmid \Do(a,b))$
depends upon $A$.
This arises because seemingly innocuous parameterizations of
the conditional distributions $p_{\YIALB}(y \cmid a, \ell, b)$ and 
$p_{L|A}(\ell \cmid a)$ (e.g.~a linear and a logistic regression) 
lead to situations where the null hypothesis can almost never hold: that is, 
it is impossible for $p_{\YIAB}(y \cmid \Do(a,b))$ not to depend upon $A$
unless either $L$ or $Y$ is completely independent of $A$.  The reason for the `paradox'
can be understood as a problem of attempting to specify the relationship between
$Y$ and $A$ in two different and potentially incompatible ways.

Note that the g-null paradox is not the same as the presence of singularities\footnote{See Appendix \ref{sec:reg} for a formal definition.} or non-collapsibility,
but rather it is a \emph{result} of non-collapsibility over a marginal model that
possibly \emph{leads to} singularities.


\begin{exmp} \label{exm:gnull}
Considering Figure \ref{fig:mod2} again, suppose that we choose $Y$ to depend linearly
on $A$, $L$ and $B$ (including any interactions we wish), and that
$L$ is binary and we use a logistic parameterization
for its dependence upon $A$.  Then, if $A$ takes four or more distinct values, it is
essentially impossible for $H_0: Y \indep A \mid \Do(B)$ to hold in such a distribution,
even if $Y$ does not depend directly upon $A$, $L$ or $B$.  This is because
\begin{align*}
\E [Y\cmid \Do(a,b)] &= \sum_{\ell=0}^1 p_{L|A}(\ell \mid a) \cdot \E [Y \mid a, \ell, b]\\
&= \beta_0 + \beta_1 a + \beta_3 b + \expit(\theta_0 + \theta_1 a) \beta_2,
\end{align*}
so the only way for this quantity to be independent of a variable $A$ with at
least four levels is for $\beta_1 = 0$ and either $\theta_1=0$ or
$\beta_2=0$.  This `union' model is singular (i.e.~not regular) at
$\theta_1 = \beta_1 = \beta_2 = 0$, and being in it implies a much
stronger null hypothesis (that either $Y \indep A, L \mid B$ or $L \indep A$
in addition to the causal independence) than the one we
are interested in investigating.
\end{exmp}

Generally speaking, if we try to state a model for
$p_{\YIALB}$ as well as requiring that $p_{\YIAB}(y \cmid \Do(a,b))$ does
not depend on $A$, we effectively try to specify the $A$-$Y$ and $B$-$Y$ 
relationships in
two different margins; in the case above these margins are incompatible,
leading to the singularity.
This is avoided by constructing a smooth, regular and variation independent
parameterization, without any redundancy.
We show that, in fact, a frugal parameterization of the joint distribution
exists that separates into variation independent parameterizations of the
quantities
\begin{align*}
p_{\ALB}(a,\ell,b), && p_{\YIAB}(y \cmid \Do(a, b)), && \text{and }\quad \phi^*_{L\YIAB}(\ell,y \cmid a,b).
\end{align*}
This entirely avoids the g-null paradox when considering
hypotheses about $p_{\YIAB}(y \cmid \Do(a, b))$, since variation
independence means that it may be freely
specified.  In addition this parameterization is such that 
one can logically specify any distribution with a joint density in this manner.

Note that the example above does not give a separate specification of the
dependence of $Y$ on $L$ that is causal, and the spurious dependence
due to the latent parent $U$: both kinds of dependence are
tied up in the association parameter $\phi^*_{L\YIAB}$.  An
alternative is to explicitly include $U$ in the model, leaving us
with
\begin{align*}
p_{U\!ALB}(u,a,\ell,b), && p_{\YIAB}(y \cmid \Do(a, b)), &&\text{and }\quad \phi^*_{U\!L,Y\hspace{-.5pt}|\hspace{-.5pt}AB}(u,\ell; y \cmid a,b),
\end{align*}
where $\phi^*_{U\!L,\YIAB}$ has to model the dependence between $Y$ and
$(U,L)$, after intervention on $A,B$.  Of course, some of these quantities
will be unidentifiable,\footnote{Specifically, $p_{U\hspace{-.5pt}|\hspace{-.5pt}ALB}(u \cmid a,\ell,b)$ and $\phi^*_{U\!L,Y\hspace{-.5pt}|\hspace{-.5pt}AB}(u,\ell ; y \cmid a,b)$.}
but we will want to be able to simulate how well the effects of $A$ and $B$
on $Y$ are estimated in the presence of unobserved confounding of
various strengths.

\begin{rmk} \label{rmk:notation}
Statistical causality is represented using a number of different overlapping
frameworks, including potential outcomes \citep{rubin74}, causal directed graphs
\citep[e.g.][]{spirtes00}, decision theory \citep{dawid_did:10}, non-parametric
structural equation models \citep[e.g.][]{pearl:09}, Finest Fully Randomized
Causally Interpretable Structured Tree Graphs \citep{robins86} and their implementation
as Single World Intervention Graphs \citep{richardson13}.  The discussions in this paper
are broadly applicable to any of these frameworks.
%
For notational purposes
we choose to use Pearl's `$\Do(\cdot)$' operator to indicate interventions.
For example, $P(Y = y \cmid A = a; \Do(B=b))$ refers to the conditional distribution
of $Y$ given $A=a$ under an experiment where $B$ is fixed by intervention
to the value $b$.  We generally abbreviate this to $p_{\YIAB}(y \cmid a; \Do(b))$.
The same quantity in the potential outcomes
framework would generally be denoted by $P(Y_{b} = y \cmid A_b = a)$.

Though slightly more verbose, the $\Do(\cdot)$ notation has the advantage that
the quantity is more immediately seen to be a conditional distribution
indexed by both $a$ and $b$, which is critical to our method.  
We will exploit the fact that a $do(X=x)$-intervention can be obtained by 
conditioning on $X=x$ after randomizing $X$, i.e.~randomly generating it from 
an arbitrary (but not trivial) distribution $\pX^*(x)$. 
%
Note also that it is ambiguous
from notation alone whether $\pYX(y \cmid \Do(x))$ is identifiable or not, since it
depends upon both the causal model being postulated and the available data;
this problem also arises with the other frameworks.
\end{rmk}

\begin{rmk} \label{rmk:identifassump}
In applications, when causal models are to be fitted on actual data,  conditions for identifiability must be met. These are well-known for all models we consider: they essentially consist of the appropriate (possibly sequential) versions of causal consistency, positivity and conditional exchangeability (or no unmeasured confounding)  given the measured covariates \citep[][]{hernan:20}.  As we are here interested in properties of causal models and how to simulate from them, we will take identification as given. 
\end{rmk}

The remainder of the paper is structured as follows: in Section \ref{sec:prereq} we
provide our main assumptions and discuss issues such as how we might choose a 
dependence measure.  Section \ref{sec:param} contains the main result outlined in 
the introduction.  In Section \ref{sec:sim} we
describe how to simulate from our models and give a series of examples, and in
Section \ref{sec:fitting} we show how to fit these models using maximum likelihood estimation.
Section \ref{sec:data} contains an analysis of real data on the relationship
between fibre intake, a polygenic risk score for obesity and children's BMI.
Section \ref{sec:survival} discusses an application of the frugal parameterization 
to survival models, and Section \ref{sec:snm} contains an extension
to models in which the causal parameter is different for distinct levels
of the treatment. 
We note that Sections \ref{sec:survival} and \ref{sec:snm} are more technical, 
and not necessary for the reader to gain insight into the main ideas of the 
paper. We conclude with a discussion in Section \ref{sec:con}.


%

\section{The Frugal Parameterization} \label{sec:prereq}

Here we present a formalization of the ideas in the introduction.
Suppose we have three random vectors $(Z,X,Y) \in \mathcal{Z} \times \mathcal{X} \times \mathcal{Y}$,
where $Y$ is an outcome (or set of outcomes) of interest, and $X,Z$ consist of relevant variables
that are considered to be causally prior to $Y$; this may be 
because they are temporally prior to $Y$, but that is not strictly necessary.
There is no restriction on the state-spaces of these
variables provided that they admit a joint density $p := \pZXY$
with respect to a product measure $\mu_Z \cdot \mu_X \cdot \mu_Y$,
and satisfy standard statistical regularity conditions.  In particular,
each of $X$, $Y$ and $Z$ may be finite-dimensional vector valued, and either
continuous, discrete or a mixture of the two. 
The fact that each of these variables may be vector valued, and that there
is no fixed ordering on variables in $X$ and $Z$ means
the method is considerably more flexible than it might at first appear.

Throughout this paper we use the notation $\pX$ to denote the
marginal density of the random variable $X$, and $\theta_{X}$ to
denote the parameter in a model for this distribution; similarly $\pYX$
and $\thetaYX$ relate to the distribution of $Y$ conditional
upon $X$.  We will need to consider marginal and conditional distributions
that are not obtained by the usual operations; for example, a
marginal distribution taken by averaging over a population with
a different distribution of covariates. 
We will typically denote such
non-standard distributions by indexing with a star: e.g.\ $\pYX^*$ or
$\thetaYX^*$.

We use $\phiYZ$ to denote parameters that describe
the dependence structure of a joint distribution; specifically,
such that when combined with the relevant marginal distributions
they allow us to recover an entire joint distribution.  Examples
include odds ratios or the parameters of a particular copula.
We also consider quantities that provide such a dependence
structure conditional on a third variable, and denote this as
$\phiYZIX$.  Again, if the dependence is in $\pZXY^*$ (defined in the
next subsection) then we will write this quantity as $\phiYZIX^*$.

We will assume that we have three separate, smooth and regular parametric models
for $\pZX$, $\pYX$ and $\phi_\YZIX$, with
corresponding parameters $\thetaZX$, $\theta_\YIX$ and $\phi_\YZIX$.
In this sense our model can be equated with
$\theta := (\thetaZX, \theta_\YIX, \phi_\YZIX)$, and for this reason
we will often refer to $\theta$ as `the model'.
For convenience, we will refer to $\pZXY$ as the \emph{observational}
distribution, and $\pZXY^*$ as the \emph{causal} distribution; we do
this even though in other possible contexts $\pZXY^*$ might not correspond 
to a standard causal intervention on $\pZXY$.

%

\subsection{Cognate Distributions and the Frugal Parameterization}

A parameterization is said to be frugal if it consists of at
least three parts: the distribution of `the past'; a (possibly) reweighted 
quantity relating to the distribution of the outcome; and then a
conditional association measure that, combined with the first two 
pieces, smoothly parameterizes the joint distribution.  

To be explicit, we require that the frugal parameterization 
includes a parameter $\thetaYX^*$ that models a conditional
distribution of the form
\begin{align}
\pYX^*(y \cmid x) &= \int_{\Z} \pYIZX(y \cmid z,x) \cdot w(z \cmid x) \, dz, && x \in \X, y \in \Y, \label{eqn:cognate}
\end{align}
for some kernel (i.e.~conditional density) $w(z \cmid x)$.  We 
call a conditional distribution that can be written in the form
(\ref{eqn:cognate}) a \emph{cognate} distribution (to $\pYX$).  
Note that cognate distributions include the ordinary conditional
as a special case, since setting $w = \pZIX$ we obtain
\begin{align*}
\int_{\mathcal{Z}} \pYIZX(y \cmid z,x) \cdot \pZIX(z \cmid x) \, dz = \pYX(y \cmid x).
\end{align*}
Common causal quantities obtained by reweighting also satisfy the definition; 
for example, given the causal model implied by Figure \ref{fig:mod}(a) we have
\begin{align*}
\pYX(y \cmid \Do(x)) \equiv \int_{\mathcal{Z}} \pYIZX(y \cmid z, x) \cdot \pZ(z) \, dz.
\end{align*}
In other words, this formulation allows
for adjustment by a subset of the previous variables.
Terms to derive the \emph{effect of treatment on the treated} (ETT) also
satisfy the definition by using the kernel $w(z) = \pZIX(z \cmid 1)$; 
the ETT considers the difference between $\E [Y \cmid X=1; \Do(X=x)]$ for
$x=1,0$, and these can be written as
\begin{align*}
\E [Y \cmid X=1; \Do(X=x)] &= \iint y \cdot \pYIZX(y \cmid z, x) \cdot \pZIX(z \cmid 1) \, dy \, dz.
\end{align*}
The effect of treatment on the control individuals (ETC) is analogously 
defined using $\pZIX(z \cmid 0)$.

It is straightforward to check that $\pYX^*$ is itself a kernel for $Y$ given $X$.
One may think of $\pYX^*$ as being a conditional distribution taken
from the larger distribution $\pZXY^*$, where
\begin{align*}
\pZXY^*(z,x,y) &= \pZXY(z,x,y) \cdot \frac{\pZX^*(z, x)}{\pZX(z, x)} \\
&= \pZXY(z,x,y) \cdot \frac{\pX^*(x) \cdot w(z \cmid x)}{\pZX(z, x)} \\
	&= \pX^*(x) \cdot w(z \cmid x) \cdot \pYIZX(y \cmid z,x). 
\end{align*}
Note that $\pZXY$ and $\pZXY^*$ share a conditional distribution
for $Y$ given $X,Z$---only the
marginal distribution of $Z$ and $X$ has been altered.  As noted in Remark 
\ref{rmk:notation} the marginal distribution
$\pX^*$ is essentially arbitrary, though later we may need it to satisfy some of
Assumptions A\ref{ass:smoo2}--A\ref{ass:kl} in order to apply our main results.

\begin{dfn}
A smooth, regular parameterization 
of random variables $(Z,X,Y)$ is said to be \emph{frugal} with respect 
to some kernel $\pYX^*$ of the form (\ref{eqn:cognate}), if it consists of 
separate parameterizations of: 
(i) the marginal distribution of $Z,X$; (ii) the kernel $\pYX^*$; 
and (iii) a conditional association measure $\phiYZIX^*$ for $Y$ and $Z$ 
given $X$.
\end{dfn}

Recall that the formal definitions of `smooth' and `regular' parameterizations 
are given in Appendix \ref{sec:reg}.
%

\subsection{Variation Independence} \label{subsec:vi}

Take a set $\Theta$ and two functions defined on it $\phi, \psi$.  We say 
that $\phi$ and $\psi$ are 
\emph{variation independent} if $(\phi \times \psi)(\Theta) = \phi(\Theta) \times \psi(\Theta)$; i.e.\ the range of the pair of functions together is equal to
the Cartesian product of the range of them individually.  
A variation independent
parameterization helps to ensure that the parameters characterize separate,
non-overlapping aspects of the the joint distribution.  Note that we may
sometimes refer to sets of distributions being variation independent, and in
this case we are really referring to their respective parameterizations.

The parameterizations of $\pZX$ and $\pYX^*$ are guaranteed to
be variation independent, since there is always a parameter cut between marginal 
and conditional pieces of this form (we discuss this in Section \ref{sec:fitting}).  
The following assumption will not 
actually be required for any of our results, but we note that, if satisfied, it 
makes interpretation and prediction somewhat easier.

\begin{enumerate}[{A}1.]
\item \label{ass:vi2} 
Given a frugal parameterization $\theta=(\thetaZX, \thetaYX, \phiYZIX)$, 
the parameter $\phi_\YZIX$ is jointly variation
independent of $\thetaZX$ and $\thetaYX$.

\end{enumerate}

We will see that this assumption is satisfied by both conditional odds ratios 
and copulas.  
\subsection{Choices of the Association Parameter}

Now that we have formally defined the parameterization, let us return to 
the original problem.  We want to be able to (i) construct, (ii) simulate from,
and (iii) fit a model using the frugal parameterization.  In 
order to do this we have to make some choices.  We take the form of $w$ and a model 
for $\pYX^*$ as given, because they are chosen by the analyst using subject 
matter considerations; this leaves us to select a parametric family $\pZX$
for $(Z,X)$, and a conditional association parameter within the causal model,
$\phiYZIX^*$.

This raises the question of how one should choose the association parameter.  In
general there are many possibilities: a risk difference or ratio, an odds ratio, 
or something else.  However, some of these objects have nicer properties than 
others.
%
In the case of binary $Y$ and $Z$ the natural choice for such an object is the conditional
odds ratio
\begin{align*}
\phiYZIX^*(x) \equiv \frac{\pYZX^*(1,1 \cmid x) \cdot \pYZX^*(0,0 \cmid x)}{\pYZX^*(1,0 \cmid x) \cdot \pYZX^*(0,1 \cmid x)},
\end{align*}
which is known to be variation independent of the margins $\pYX^*$ and $p_{\ZIX}^*$,
and also has the property that if $\pZXY^*$ is multiplied by any function
of $(x,z)$ or $(x,y)$ it does not change.  %
%
More specifically, note that 
$\pZXY^* = \pZXY \cdot \pZX^*/\pZX$, and hence
\begin{align}
\phiYZIX^*(x) = \frac{\pYZX^*(1,1 \cmid x) \cdot \pYZX^*(0,0 \cmid x)}{\pYZX^*(1,0 \cmid x) \cdot \pYZX^*(0,1 \cmid x)}
= \frac{\pYZX(1,1 \cmid x) \cdot \pYZX(0,0 \cmid x)}{\pYZX(1,0 \cmid x) \cdot \pYZX(0,1 \cmid x)} = \phiYZIX(x). \label{eqn:or_inv}
\end{align}
In other words, the conditional odds ratio for the causal and observational
distributions are the same, and this does not hold for other conditional association
parameters \citep{edwards63association}.
%

This definition and the invariance result (\ref{eqn:or_inv}) extends to 
distributions over any statespace under mild conditions \citep{osius09asymptotic}, 
and---in theory---the joint distribution 
can be recovered from the odds
ratio and marginal distributions using the \emph{iterative
proportional fitting} (IPF) algorithm \citep{bishop:67, darroch:72:ipf, 
csiszar:75, ruschendorf:95}.  
Other fitting approaches are discussed by \citet{tchetgen10}.  
Note that, for general continuous distributions, it is not possible to
implement the algorithm in practice in most cases, because the intermediate
distributions will not have a closed form;
an obvious exception to this is the multivariate
Gaussian distribution. 

Alternative possibilities include the risk difference and risk ratio, though
these lack the variation independence in A\ref{ass:vi2} possessed by the odds ratio,
unless combined with the odds product as in \citet{richardson:17:oddsproduct}.  We
will use these difference and ratio contrasts in Section \ref{sec:snm}, to 
parameterize the `blip' functions in a structural nested mean model.

%
%

\begin{prop} \label{prop:disc}
If $X$, $Y$ and $Z$ are finite categorical variables and have strictly positive 
conditional distribution $\pYZX > 0$,
then using smooth parameterizations of the marginal distributions $\pZX$ and $\pYX$, 
together with the conditional odds 
ratio $\phiYZIX$ is a frugal parameterization that satisfies assumption 
A\ref{ass:vi2}.  Indeed, $X$ can also be a continuous or mixed variable 
(c.f.\ Example \ref{exm:binYZ}).
\end{prop}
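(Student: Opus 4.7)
The plan is to verify the defining conditions of a frugal parameterization together with the variation-independence assumption A\ref{ass:vi2}. Taking the kernel $w(z \cmid x) = \pZIX(z \cmid x)$ in (\ref{eqn:cognate}) makes the ordinary conditional $\pYX$ itself a cognate, so we may set $\pYX^* = \pYX$ and $\pZXY^* = \pZXY$; the substantive content is then that $(\thetaZX, \thetaYX, \phiYZIX)$ defines a smooth and regular bijection onto the strictly positive joint distributions on $\Z \times \X \times \Y$. For bijectivity I would note that $\thetaZX$ yields the positive marginal $\pZX$, and hence both $\pX$ (by summation) and $\pZIX$ (by conditioning). For each $x \in \X$ I would then recover $\pYZX(\cdot,\cdot \cmid x)$ from its margins $\pYX(\cdot \cmid x)$, $\pZIX(\cdot \cmid x)$ and the odds-ratio table $\phiYZIX(\cdot,\cdot \cmid x)$ using the classical fact that a strictly positive distribution on a finite contingency table is pinned down by both marginals together with all odds ratios: existence via IPF \citep{bishop:67, csiszar:75, ruschendorf:95}, uniqueness by convex duality. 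The joint is then reassembled as $\pZXY(z,x,y) = \pX(x) \cdot \pYZX(y,z \cmid x)$; the extension to continuous or mixed $X$ is immediate, since the reconstruction operates level-by-level in $x$.

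For smoothness I would use the log-linear reparameterization of the positive simplex on $\Y \times \Z$: writing $\log \pYZX(y,z \cmid x)$ as a sum of row, column and interaction effects plus a normalizing constant gives an analytic bijection between the interior of the simplex and four groups of parameters, of which the two margins and the odds ratios correspond to three smooth functionals (with the normalizer fixed by the probabilities summing to one). The Jacobian of the map from the joint table to (margins, odds ratios) is nonsingular on the positive interior, so its inverse is real-analytic by the implicit function theorem. Composed with the smooth, regular parameterizations of the three components, this yields smoothness and regularity of $\theta \mapsto \pZXY$, with positive-definite Fisher information following from the chain rule together with the corresponding property of each sub-parameterization.

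For A\ref{ass:vi2} I would invoke the classical variation-independence result \citep{edwards63association, osius09asymptotic} that on a strictly positive finite table the odds ratios are variation independent of each pair of margins. Applied for each $x$, this gives that $\phiYZIX(\cdot,\cdot \cmid x)$ may be freely specified alongside any positive $\pYX(\cdot \cmid x)$ and $\pZIX(\cdot \cmid x)$; since $\phiYZIX$ does not depend on $\pX$ and $\thetaZX$ factorizes into $(\pX,\pZIX)$, we obtain variation independence of $\phiYZIX$ from the pair $(\thetaZX, \thetaYX)$. The main obstacle is the smoothness step: IPF alone only yields convergence, so one really needs the log-linear reparameterization or the implicit function theorem to establish that the joint table depends real-analytically on the margins and the odds ratios. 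Once that is in place, both the bijectivity and the variation-independence claims follow from classical results on positive contingency tables.
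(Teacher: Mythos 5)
Your proof is correct, but it takes a different route from the paper: the paper's entire proof is a one-line appeal to the results of Bergsma and Rudas (2002) on marginal log-linear parameterizations. The decomposition into $\pZX$, $\pYX$ and $\phiYZIX$ is exactly a marginal log-linear parameterization built from the hierarchical and complete sequence of margins $\{Z,X\}$, $\{Y,X\}$, $\{Y,Z,X\}$, and their theorems deliver smoothness, regularity and variation independence in one stroke. You instead reassemble the argument from its classical ingredients: existence and uniqueness of a positive table with prescribed margins and odds ratios (IPF plus convex duality of the I-projection), smoothness of the inverse map via the log-linear reparameterization and the implicit function theorem, and the classical variation independence of odds ratios from their margins applied level-by-level in $x$. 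Your self-aware remark that IPF alone gives only convergence, not smoothness, and that the implicit function theorem is the essential extra step, is exactly the point that the Bergsma--Rudas machinery packages away; their smoothness proof rests on the same nonsingular-Jacobian argument you sketch. What the citation route buys is brevity and immediate generality (arbitrary numbers of categories, and more elaborate marginal structures used later in the paper); what your route buys is a self-contained and transparent mechanism, at the cost of having to be slightly careful that the composite Fisher information is nonsingular --- which your diffeomorphism argument does supply, though the closing appeal to ``the chain rule together with the corresponding property of each sub-parameterization'' is the one place where a referee might ask you to spell out that the overall Jacobian has full rank because the reconstruction map is a diffeomorphism onto the positive interior.
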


\begin{proof}
This follows from the results of \citet{br02}.
\end{proof}

\begin{exm}
For multivariate Gaussian random variables, or other distributions that
are defined by their first two moments, the partial correlation
$\rho_{\YZIX} \equiv \Cor(Y,Z \cmid X)$
satisfies the conditions for being a conditional association parameter
$\phiYZIX$, in the sense that when combined with the marginal
distributions for each of $Y$ and $Z$ given $X$, one can recover the
joint conditional distribution $p_\YZIX$.  
\end{exm}

\begin{exm} \label{exm:cop}
An alternative to the odds ratio for general continuous variables
is to use a \emph{copula}, which separates
out the dependence structure from the margins by rescaling
the variables via their univariate cumulative distribution functions.
A multivariate copula is a cumulative distribution function
with uniform marginals; i.e.\ a function
$C : [0,1]^d \rightarrow [0,1]$
which is increasing and right-continuous in each argument, and
such that $C(1,\ldots,1,u_i,1,\ldots,1) = u_i$ for all $u_i \in [0,1]$ and
$i \in \{1,\ldots, d\}$.

Recall that, for a continuous real-valued random variable
$Y$ with CDF $F_Y$, the random variable $U \equiv F_Y(Y)$
is uniform on $(0,1)$.  The bivariate copula model for $Y$ and $Z \in \mathbb{R}$
is then
\begin{align*}
C_{Y\!Z}(u, v) \equiv P(F_Y(Y) \leq u, F_Z(Z) \leq v), && u,v\in [0,1].
\end{align*}
There is a one-to-one correspondence between 
copulas and multivariate continuous CDFs with uniform marginals.  By Sklar's
Theorem (\citealp{sklar:59}, see also \citealp{sklar:73}), any
copula can be combined with any collection of continuous margins
to give a joint distribution, via (in our bivariate example)
\begin{align*}
F_{Y\!Z}(y, z) \equiv C(F^{-1}_Y(y), F^{-1}_Z(z)), && y,z \in \reals.
\end{align*}
We will assume that the copula is
parametric, and then $\phiYZIX$ represents the parameters of the
particular family of copulas.
\end{exm}

\begin{prop} \label{prop:cont}
If $Y$ and $Z$ are continuous with a positive conditional distribution for each $x \in \X$,
then any smooth and regular parameterization of their marginals
$\pZX$ and $\pYX$ together with a smooth and regular conditional
copula $C_{\YZIX}$ is a frugal parameterization that satisfies assumption A\ref{ass:vi2}.
\end{prop}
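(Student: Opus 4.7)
The plan is to invoke Sklar's theorem conditionally, then verify each ingredient required of a frugal parameterization. First, fix $x \in \X$: the continuous conditional marginals $\pZIX(\cdot \cmid x)$ and $\pYX(\cdot \cmid x)$ have strictly increasing CDFs $F_{Y|X}(\cdot \cmid x)$ and $F_{Z|X}(\cdot \cmid x)$, so Sklar's theorem applied pointwise in $x$ shows that the smooth conditional copula $C_\YZIX$ uniquely determines the joint conditional CDF via
\[
F_\YZIX(y,z \cmid x) = C_\YZIX\bigl(F_{Y|X}(y \cmid x),\, F_{Z|X}(z \cmid x) \bigm| x\bigr),
\]
and differentiation yields the joint conditional density $c_\YZIX\bigl(F_{Y|X}(y|x), F_{Z|X}(z|x) \bigm| x\bigr) \cdot \pYX(y \cmid x) \cdot \pZIX(z \cmid x)$ in terms of the copula density $c_\YZIX$. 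Multiplying by $\pX$ (extracted from $\pZX$) reconstructs the joint density $\pZXY$, so the three pieces do determine the full distribution.

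Next I would verify non-redundancy: starting from any positive joint $\pZXY$, the marginals $\pZX$ and $\pYX$ are uniquely recovered by ordinary marginalization, and Sklar's uniqueness clause --- valid because the conditional margins are continuous --- determines the conditional copula uniquely at each $x$. The map from $(\thetaZX, \thetaYX, \phiYZIX)$ to the joint density is therefore a bijection onto the family of joint distributions with positive density. Variation independence (A\ref{ass:vi2}) then follows immediately from the existence half of Sklar's theorem: any continuous conditional marginals can be combined with any conditional copula to form a valid joint density, so the range of the triple is exactly the Cartesian product of the individual ranges.

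The main obstacle is transferring smoothness and regularity to the composite model. Differentiability in quadratic mean for $\pZXY$ is a chain-rule calculation on the product $c(u,v;\phi) \cdot \pYX(y \cmid x) \cdot \pZIX(z \cmid x) \cdot \pX(x)$ with $u = F_{Y|X}(y|x)$, $v = F_{Z|X}(z|x)$, provided standard integrability conditions on the copula density and its derivatives in $\phi$, $u$, and $v$ hold --- as they do for Gaussian, Clayton, Gumbel, Frank, and other common parametric families. Positive-definiteness of the joint Fisher information then reduces to showing linear independence in $L^2(\pZXY)$ of the score contributions tied to the three parameter blocks; this follows from the bijectivity already established (an infinitesimal perturbation leaving the joint distribution fixed must leave each of the three recoverable pieces fixed) combined with the hypothesized regularity of each individual piece, completing the verification that the resulting parameterization is smooth, regular, frugal, and variation independent in the sense of A\ref{ass:vi2}.
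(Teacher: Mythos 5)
Your proposal is correct and follows essentially the same route as the paper, whose entire proof is a citation to Sklar (1973): you have simply unpacked that citation, using the conditional form of Sklar's theorem for existence, its uniqueness clause for continuous margins to get non-redundancy, and its existence half to get the variation independence required by A1. The additional remarks on differentiability in quadratic mean and the Fisher information go beyond what the paper records but are consistent with the definitions in its Appendix A.
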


\begin{proof}
This follows from the results of \citet{sklar:73}.
\end{proof}

Note that the copula is only used to model the interaction, thus allowing us 
to retain the simple interpretation of the marginal model $\pYX^*$ in terms of 
an interventional distribution. 
In contrast to the odds ratio note that conditional copulas 
do not satisfy (\ref{eqn:or_inv}), 
because the copula also depends upon the cumulative distribution function of
the corresponding margins; this is a slight disadvantage in comparison to the 
odds ratio.  
We will return to these examples in Section \ref{sec:sim}. 


\begin{exm} \label{exm:mix}
We can also use copulas to model variables in a more flexible 
way by including categorical variables.  Suppose that we have a mixture of 
continuous and binary variables among the elements of $Z$ and $Y$.  Then
we might choose to model them using an approach analogous to that of 
\citet{fan17high}, who propose a Gaussian copula model that is dichotomized
for the binary components.  Their estimation methods show that the resulting
joint distribution is a smooth function of the parameters.  This model, 
combined with smooth marginal models will also be frugal and satisfy 
A\ref{ass:vi2}.  We use this approach in our data analysis
example in Section \ref{sec:data}.
\end{exm}


More general versions of the frugal parameterization are given in Sections 
\ref{sec:survival} and \ref{sec:snm}, though again we note that the rest of 
the paper can be read without reference to those sections. 

\section{Main Result} \label{sec:param}


We now give the main result outlined in the introduction:
given a weight function $w$,
a parameterization $\theta = (\theta_{\ZX}, \thetaYX, \phiYZIX)$ of $\pZXY$ 
induces a corresponding frugal parameterization $\theta^* = (\thetaZX, \theta_\YIX^*, \phi_\YZIX^*)$, also of $\pZXY$.  In particular, we can choose any parametric 
model for any cognate distribution $\pYX^*$, 
and use it to construct a smooth parameterization of the joint density. 
In other words, in terms of parameterization there is no essential difference between choosing a model for $\pYX^*$ or for the ordinary conditional distribution $\pYX$.
When we do this, the smoothness and regularity of the parameterization of the observational model ($\theta_\YIX$) as well as its variation independence
to $\thetaZX$ and---possibly---the association parameters, is preserved in the
new parameterization of $\pZXY$.
The Theorem \ref{thm:main} below formalizes this.

We first need to introduce a couple of additional assumptions.  Recall that the functionals $\thetaYX$ and $\phiYZIX$ for $\pZXY$ have an identical 
form to the functionals $\thetaYX^*$ and $\phiYZIX^*$ for $\pZXY^*$.
We will assume that $\pZX^* = \pX^* \cdot w$ is
smoothly and regularly parameterized by a function of $\theta_\ZX$,
and a relative positivity of the observational distribution.  Recall also that 
the analyst chooses $\pYX^*$ and $w$ based on subject matter considerations.\\[-18pt]
\begin{enumerate}[{A}1.]
\addtocounter{enumi}{1}
\item \label{ass:smoo2} 
The product $\pZX^* = \pX^* \cdot w$ has a smooth and regular parameterization
$\eta_\ZX := \eta_\ZX(\thetaZX)$, where $\eta_\ZX$ is a twice differentiable
function with a Jacobian of constant rank.

\item \label{ass:ac1} $\pZX$ is absolutely continuous with respect to $\pZX^*$ 
at the true distribution $\pZXY$.
\end{enumerate}

To clarify, we have two separate parameterizations of $\pZXY$.  The first, $\theta$,
corresponds to using the ordinary conditional distribution $\pYX$ in our frugal parameterization and `default' weight function $w_0(z \cmid x) = \pZIX(z \cmid x)$, whereas the second $\theta^*$ uses a cognate distribution $\pYX^*$ for some other 
weight $w$. 
As a note of caution, the two models for $\pZXY$ 
induced by $\theta$ and $\theta^*$ are \textbf{not} generally the same, because 
they apply to different functionals of $\pZXY$; if the models are both saturated 
then the sets of distributions themselves \emph{will} be the same, but the 
parameters have different interpretations, and their values are therefore generally
different. 

\vspace{6pt}

%
%

\begin{thm} \label{thm:main}
Let $\pZXY$ be a distribution parameterized by $\theta := (\thetaZX, \thetaYX, \phiYZIX)$
with weight function $\pZIX$,
and $w$ a kernel satisfying A\ref{ass:smoo2}; 
we also assume that A\ref{ass:ac1} holds.

Then $\theta$ is frugal w.r.t.~$\pYX$ 
if and only if $\theta^* := (\thetaZX, \thetaYX^*, \phiYZIX^*)$ is also frugal
w.r.t.~$\pYX^*$.
In addition, if $\phiYZIX$ satisfies A\ref{ass:vi2} and $\eta_\ZX(\Theta_\ZX)\subseteq \Theta_\ZX$, then $\phiYZIX^*$ also does.
\end{thm}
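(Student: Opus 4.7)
The plan is to exploit the factorization
\[
\pZXY^*(z,x,y) = \pX^*(x) \cdot w(z \cmid x) \cdot \pYIZX(y \cmid z, x),
\]
which shows that $\pZXY$ and $\pZXY^*$ share the identical conditional $\pYIZX$. Under A\ref{ass:ac1} we therefore have $\pZXY^* = \pZX^* \cdot \pZXY / \pZX$ wherever $\pZX > 0$, and conversely $\pZXY = \pZX \cdot \pZXY^* / \pZX^*$. This yields a smooth bijection between $\pZXY$ and $\pZXY^*$ once we fix both $(Z,X)$-marginals, each of which is parameterized by $\thetaZX$ (the second via $\eta_\ZX$).

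For the `iff' claim, suppose $\theta$ is frugal, so that $\theta \mapsto \pZXY$ is smooth and regular. By A\ref{ass:smoo2}, $\thetaZX \mapsto \pZX^*$ is smooth, and combined with A\ref{ass:ac1} the map $\theta \mapsto \pZXY^* = \pZX^* \cdot \pYIZX$ is also smooth. The cognate kernel is then obtained via
\[
\pYX^*(y \cmid x) = \int \pYIZX(y \cmid z, x) \cdot w(z \cmid x) \, dz,
\]
and $\phiYZIX^*$ is read off from the conditional association structure of $\pYZX^*$. Hence $\theta \mapsto \theta^*$ is smooth, and the bijection $\pZXY \leftrightarrow \pZXY^*$ makes $\theta^*$ frugal w.r.t.\ $\pYX^*$. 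The reverse direction is symmetric: from $\theta^*$ we recover $\pZXY^*$ by frugality, then $\pYIZX = \pZXY^*/\pZX^*$, and finally $\pZXY = \pZX \cdot \pYIZX$.

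The variation independence statement follows by applying Propositions \ref{prop:disc} and \ref{prop:cont} to $\pZXY^*$ itself, viewed as a bona fide joint distribution over $(Z,X,Y)$. These results guarantee that the association parameter $\phiYZIX^*$ (an odds ratio or copula parameter, say) is variation independent of the marginals $\pZX^*$ and $\pYX^*$ at the level of distributions; the assumption A\ref{ass:vi2}, combined with the fact that the starred parameterization uses the same form of association measure, extends this to the parameter level. The condition $\eta_\ZX(\Theta_\ZX) \subseteq \Theta_\ZX$ ensures that every $\thetaZX$ produces a valid $\pZX^*$ parameter in the same space, allowing the distributional-level variation independence to lift to variation independence of $\phiYZIX^*$ from $(\thetaZX, \thetaYX^*)$.

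The main obstacle I anticipate is verifying the \emph{regularity} component of `smooth and regular'---specifically that the Fisher information of the starred parameterization remains positive definite. This reduces to showing that the Jacobian of $\theta \mapsto \theta^*$ has full rank, which rests on the constant-rank hypothesis for $\eta_\ZX$ in A\ref{ass:smoo2} and on smoothness of the extraction of $\thetaYX^*$ and $\phiYZIX^*$ from $\pZXY^*$. Equally delicate is checking that A\ref{ass:ac1} genuinely rules out parameter values where the ratio $\pZXY / \pZX$ becomes ill-defined, so that the reconstruction $\pZXY^* = \pZX^* \cdot \pZXY / \pZX$ is smooth throughout the parameter space.
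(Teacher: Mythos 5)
Your proposal is correct and follows essentially the same route as the paper's proof: both hinge on the reweighting identity $\pZXY^* = \pZXY \cdot \pZX^*/\pZX$ (made valid by A\ref{ass:ac1}), on the smoothness of $\pZX^* = w\cdot\pX^*$ as a function of $\eta_\ZX(\thetaZX)$ guaranteed by A\ref{ass:smoo2}, and on observing that the starred triple is itself a frugal parameterization of $\pZXY^*$ to which A\ref{ass:vi2} applies before lifting back along $\eta_\ZX$ using $\eta_\ZX(\Theta_\ZX)\subseteq\Theta_\ZX$. The only cosmetic difference is that you route the variation-independence step through Propositions \ref{prop:disc} and \ref{prop:cont} (the odds-ratio and copula cases), whereas the paper argues abstractly from A\ref{ass:vi2} restricted to the (possibly smaller) starred family, which is what the general statement requires; your closing concerns about regularity and the Jacobian rank are legitimate but are likewise left implicit in the paper's own argument.
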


\begin{proof}
First, note that by definition, either parameterization can use 
$\thetaZX$ to obtain $\pZX$.  
Then combining with A\ref{ass:smoo2} we can obtain $w \cdot \pX^*$
as a smooth function of $\eta_\ZX(\thetaZX)$.  Then note that 
by A\ref{ass:ac1} we have
\begin{align}
\pZXY^*= \pZXY \frac{p_{\ZX}^*}{p_{\ZX}} = \pZXY \frac{w \cdot \pX^*}{p_{\ZX}}, \label{eqn:equiv}
\end{align}
so given that the fraction here is a smooth function of $\thetaZX$ from 
either parameterization, it is clear that we can obtain $\pZXY^*$ 
smoothly from $\theta^*$ if and only if we can obtain $\pZXY$ smoothly from
$\theta$.  
This proves that $\theta$ is a smooth and regular parameterization if and only if 
$\theta^*$ is.

For A\ref{ass:vi2}, note that if $\phiYZIX$ is variation independent of 
$\thetaZX$ and $\thetaYX$, then 
we also have that $\phiYZIX^*$ is variation independent of $\eta_\ZX(\thetaZX)$ and $\thetaYX^*$,
because this is just A\ref{ass:vi2}
applied to the (possibly) smaller set of distributions $\pZXY^*$.  
Then notice that modifying the value of $\thetaZX$ in such a way that keeps 
the value of $\eta_\ZX$ the same will have no effect on the possible values 
of $\phiYZIX^*$, and hence A\ref{ass:vi2} holds for $\theta^*$.
%
\end{proof}


\begin{rmk}
The previous result tells us that, given a suitable dependence measure $\phi$,
we can propose almost arbitrary (i.e.\ provided that they satisfy the assumptions
indicated in the Theorem) separate parametric models for
each of the three quantities $\pZX(z, x)$, $\pYX^*(y \cmid x)$ and
$\phiYZIX^*(y, z \cmid x)$,
and be sure that there exists a (unique) joint distribution $\pZXY(z,x,y)$
compatible with that collection of models.
Of course, this leaves open the question
of how we should compute that joint distribution.

The requirement that the image of $\eta_\ZX$ is contained within the set of possible
distributions $\pZX$ is a very mild condition.
In addition, if we use a copula or odds ratio as the conditional association measure 
the implication \emph{always} holds, regardless of this assumption.
%
\end{rmk}

\begin{exmp} \label{exm:run2}
Picking up Example R\ref{exm:run} again and, for now, consider only the observed
variables (though see Example R\ref{exm:havercroft_sim} in Appendix
\ref{sec:vcop} for details on how to
simulate from all the variables).  Take $Z = L$ and $X = (A,B)$, then
Theorem \ref{thm:main} says that we can parameterize the model using
parametric models of the three pieces
\begin{align}
&p_{\ALB}(a, \ell, b) && p_{\YIAB}(y \cmid \Do(a,b)) && \phi_{L\YIAB}^*(\ell, y \cmid a,b). \label{eqn:havercroft_param}
\end{align}
For convenience, we choose to factorize $p_{\ALB}$ according to the ordering $A,L,B$.
Set $A \sim \operatorname{Bernoulli}(\theta_a)$, $L$ is conditionally exponentially
distributed with mean $\E[L \mid A=a] = \exp(-(\alpha_0 + \alpha_a a))$, and
\begin{align*}
B \mid A=a, L=\ell \sim
\operatorname{Bernoulli}(\expit(\gamma_0 + \gamma_a a + \gamma_\ell \ell + \gamma_{a\ell} a \ell)).
\end{align*}
Let us suppose that $Y$ is normally distributed under the intervention on $A,B$,
with mean
\begin{align*}
\E[Y \mid \Do(A=a, B=b)] &= \beta_0 + \beta_a a + \beta_b b + \beta_{ab} ab
\end{align*}
and variance $\sigma^2$.
Let
$\phi_{L\YIAB}^*$ be a conditionally bivariate Gaussian copula, with correlation
parameter given by some function $\rho_{ab}$ of $a$ and $b$.  This parameterization
is frugal and satisfies A\ref{ass:vi2}.

In addition, note that this approach entirely circumvents the
g-null paradox discussed in Example R\ref{exm:gnull}, because the marginal
dependence of $Y$ on $A$ (after intervention on $A$ and $B$) is uniquely and
explicitly encoded by the parameters $\beta_a, \beta_{ab}$.  
\end{exmp}

\section{Sampling from a marginal causal model} \label{sec:sim}

In this section we will consider how to sample from $\pZXY$ using
a frugal parameterization $\theta^*$, sometimes analytically,
but more commonly via the method of rejection sampling.
Note that, now we have constructed a valid parameterization, we
will no longer need to refer to the model on $\pZXY$ defined by $\theta$.
From this point on, we only discuss the model on $\pZXY$ parameterized by $\theta^*$,
and the corresponding model on $\pZXY^*$ that replaces $\thetaZX$ with 
$\eta_\ZX(\thetaZX)$.

We first
review how one should go about choosing such a parameterization.

\begin{enumerate}
\item Choose the quantity $\pYX^*$ which you wish to model, or of which you
wish to model a function, and select a parameterization $\thetaYX^*$ (this should
include the quantity of interest).

\item Determine the kernel $w$ over which we need to integrate $\pYIZX$ in 
order to obtain $\pYX^*$, and a dummy marginal distribution $\pX^*$ over $X$.  
This should not be degenerate, and for efficient sampling should be similar in form to 
the observational margin $\pX$.

\item Introduce a parameterization $\thetaZX$ of $\pZX$, such that
$\pZX^* = w \cdot \pX^*$ is smoothly and regularly parameterized by a
twice differentiable function $\eta_\ZX$ of $\thetaZX$.

\item Choose a `suitable' parameterization $\phiYZIX^*$ of the dependence
in $Z$-$Y$ conditional upon $X$ in the causal distribution $p^*$.
\end{enumerate}

The three pieces $\thetaZX$, $\thetaYX^*$ and $\phiYZIX^*$ will make up the
frugal parameterization.  To make point 3 more concrete, in
Example \ref{exm:binYZ} we can
set $\thetaZX$ to be the combination $(q, \gamma, \sigma^2)$, and then
take $\pX^* \sim N(0, 2\sigma^2)$; this ensures it will have heavier tails
than $p_\XIZ \sim N(\gamma z, \sigma^2)$ which, as we will see in Section
\ref{sec:alg}, is crucial for sampling.

For point 4, the question of suitability of the dependence measure, we would wish to
consider: (i) whether the relevant variables can be modelled with the
particular dependence measure selected (e.g.\ odds ratios are suitable for discrete
variables, but not so useful in practice for continuous ones);
(ii) the computational cost of constructing the joint distribution;  (iii)
whether we want the dependence measure to be variation independent of
its baseline measure; if so that would rule out risk ratios and differences.
For a larger model with a vector valued $X$, we might wish to fit different dependence
measures for each treatment variable; see Section \ref{sec:snm} for an example
of this with a Structural Nested Mean Model.

\subsection{Direct Sampling} \label{sec:or}

For fully discrete or multivariate Gaussian models, it is possible
to compute $\pZXY^*$ and then  `reweight' by $\pZX/\pZX^*$ to
obtain the distribution $\pZXY$ in closed form.  As noted in Proposition \ref{prop:disc},
in the discrete case this is straightforward using (conditional) log
odds ratios to obtain a frugal parameterization of the distributions.
For example, if $Y$ and $Z$ are both binary, taking values in $\{0,1\}$, we can use
\begin{align*}
\log \phiYZIX(x) := 
\log \frac{\pYZX(1,1 \cmid x) \cdot \pYZX(0,0 \cmid x)}{\pYZX(1,0 \cmid x) \cdot \pYZX(0,1 \cmid x)}.
\end{align*}
For further details, including what happens if there are more than two levels to $Y$ or $Z$, see \citet{br02}.
%
As noted in (\ref{eqn:or_inv}), a nice property of the odds
ratios as the association parameter is that their
values in the observational and causal distributions are always the same.

\begin{exmp} \label{exm:run3}
Let us apply this to a discrete version of Example R\ref{exm:run2} from
\citet{havercroft12}; we know that the objects in (\ref{eqn:havercroft_param})
are sufficient to define the model of interest.  If all the variables are
binary, then we start with a parameterization of $p_{\ALB}$ and $p^*_{\YIAB}$
using (conditional) probabilities, and $\phi^*_{L\YIAB} (=\phi_{L\YIAB})$ using conditional
odds ratios.

Assume, for example, that
\begin{align*}
Y \mid \Do(A=a, B=b) \sim \operatorname{Bernoulli}(\expit(-1 + a + ab)),
\end{align*}
with $L \mid A=a \sim \operatorname{Bernoulli}(\expit(2a - 1))$,
and $\log \phi_{L\YIAB}(a,b) = 1 + a - 2b + ab$.  Then specifying, for instance,
$B \mid A=a, L=\ell \sim \operatorname{Bernoulli}(\expit(1-a-2\ell+a\ell))$
implies that the  ordinary conditional $p_{\YIAB}$
is, by a direct calculation,
\begin{align*}
Y \mid A=a, B=b \sim \operatorname{Bernoulli}(\expit(-0.245 + 0.432a - 0.500b + 0.846ab)).
\end{align*}
Note that the `observational' conditional parameters are quite different from
their causal counterparts.
%
\end{exmp}

\subsection{Sampling By Rejection} \label{sec:alg}

In most realistic situations the data cannot be modelled as entirely discrete
or multivariate Gaussian.  In such cases we suggest simulating from 
a distribution constructed analogously to 
the causal model, and then using
rejection sampling to modify the marginal distribution of $X$
and $Z$ and obtain data from the corresponding observational distribution. 
The idea of rejection sampling is very simple.  Suppose we have two distributions: a \emph{target} $p$ that is difficult to sample from, and a \emph{proposal}
$q$ that is both easy to sample and \emph{dominates} $p$, in the sense that
there is some $M$ such that $p/q \leq M$ in a $p$-almost sure sense; then we 
can obtain independent samples from $q$ and reject only those samples $X$ for 
which $p(X)/q(X) > M \cdot U$, where $U$ is an independent uniform random variable 
on $(0,1)$. 
The samples that are not rejected are then distributed independently 
from $p$ \citep[see, for example,][Chapter 2]{robert:04}. 

We might hope that, since $\pZXY^*$ is relatively easy to sample from, then 
we would find that $\pZX^* = w \cdot \pX^*$ dominates $\pZX$; unfortunately this 
is generally not the case and is extremely implausible unless 
$Z$ is discrete.  However, a weaker assumption is sufficient.  

\begin{enumerate}[{A}1.]
\addtocounter{enumi}{3}
\item \label{ass:tails} The set $\Z$ can be partitioned into a countable number of
bins $\mathcal{B}=\{B_i\}$ such that, for each $i$, there $p_\ZX$-almost surely exists $M_i$ with 
$\pZX(z, x)/\pZX^*(z, x) \leq M_i$ for all $x\in \X, z \in B_i$. 
\end{enumerate}
The significance of this assumption is that given $n$ i.i.d.~realizations from $\pZ$ 
we can then partition them into $\mathcal{B}$, and target obtaining the same number 
of observations via a local rejection sampling scheme in each bin.  Note that this
original sample of $Z$s is never used after determining the number of observations
within each bin. 

Of course, to use this assumption we must be able to sample from $\pZXY^*$, 
and the feasibility of this depends 
upon the particular model; however it is generally a much easier condition to 
satisfy than being able to sample from $\pZXY$ directly given that $\pYX^*$ is
already specified.  With a copula, 
it is essentially trivial: we can just sample directly from the 
copula, and then use inversion to ensure the margins are correct \citep{clifford94monte}.

We note that if the weighting is sometimes particularly heavy or the model
is high-dimensional, then some of bounding constants $M_i$ will be large and/or
some of the bins for $Z$ have very low probability of being proposed, 
so the rejection method becomes very inefficient. 
However, since we can evaluate the joint distribution exactly if we use a copula, 
other more advanced simulation
methods can be used instead of rejection sampling.  A disadvantage is that the  
samples would generally only be approximately distributed correctly, but the level 
of error could easily be chosen to be statistically undetectable.  We leave this to 
future work.

\subsection{Copulas} \label{sec:cop}

As previously discussed, copulas may provide an approach
to a frugal parameterization of models with continuous $Y$ and $Z$.
In this section we describe how copulas may be used to simulate
from and fit causal models with particular marginal specifications.

In the simplest case, we can start by simulating values for $X$ using some $\pX^*$,
and then use the copula to simulate from 
the causal distribution on the scale of quantiles.  We then 
apply the inverse CDF of $\pYX^*(y \cmid X=x_i)$ and $\pZ(z)$ to the uniform 
margins to obtain the actual observations.  The parameters for the copula 
itself (i.e.~$\phiYZIX^*$) may or may not depend upon $X$. 
To obtain samples from the observational distribution, we can use rejection
sampling, provided that A\ref{ass:tails} is satisfied.


\begin{exmp} \label{exm:run4}
Continuing our running example from \citet{havercroft12}, suppose we now wish
to simulate some data from the model specified in Example R\ref{exm:run2} by rejection sampling.
We first select some  values for the parameters:
\begin{align*}
\theta_a &= 0.5 & (\gamma_0, \gamma_a, \gamma_\ell, \gamma_{a\ell}) &= (-0.3, 0.4, 0.3,0)  \\
(\alpha_0, \alpha_1) &= (0.3,-0.2) & (\beta_0, \beta_a, \beta_b, \beta_{ab}) &= (-0.5, 0.2, 0.3,0)
\end{align*}
and $\rho_{ab} = 2\expit(1+a/2)-1$.
Taking a
large sample size of $10^6$, we indeed find (empirically, using goodness-of-fit tests)
that $\E A = 0.5$,
that $L$ appears to be exponentially distributed with the specified mean, and that
$\E [B \mid A=a, L=\ell]$ has the correct form.  In addition, if we fit
an inverse probability weighted (IPW) linear model for $Y$ (using the fitted
value we obtain from the regression for $B$, see \citealt[][Chapter 12]{hernan:20}) the parameters for the interventional distribution of $Y$ under $\Do(A=a, B=b)$ are also as expected:
\begin{align*}
\hat\beta_0 &= -0.4985 \,(0.0022)&
\hat\beta_a &= 0.1999  \,(0.0033) &
\hat\beta_b &= 0.3002  \,(0.0030)&
\hat\beta_{ab} &=  -0.0030 \,(0.0042). 
\end{align*}
Code to replicate this analysis can be found in the vignette \texttt{Comparison}
of the R package \texttt{causl} \citep{causl:21}.
\end{exmp}

Copulas lack many of the attractive properties of odds ratios, such as 
the invariance in (\ref{eqn:or_inv}), and their interpretation is different 
because it is in terms of the quantiles of the margins
rather than their actual value.  However, they can
be extremely flexible if one has a multivariate outcome, because one
can make use of \emph{vine copulas} to model them.  See Appendix 
\ref{sec:vcop} for more details.

\section{Fitting Methods} \label{sec:fitting}

We start this subsection with a result telling us how to fit marginal
structural models using maximum likelihood (ML) estimation.  In fact, it
turns out that if we have a marginal structural model and our full model
parameterized by $\theta^*$ is correctly specified for the \emph{observational}
data from $\pZXY$,
then the MLE for $\pYX^*$ is obtained by maximizing the likelihood for the
\emph{causal} model (i.e.\ with $X$ and $Z$
assumed to be independent) with respect to the observational data from $\pZXY$
(so $X$ and $Z$ are in fact \emph{not} independent).
This is the content of Theorem \ref{thm:fitting} below.

Note also that, although this result will not generally hold if
part of the model is misspecified, if the \emph{propensity score} model
$p_{\XIZ}$ is incorrect then this will not affect inference about the
remainder of the model when $w(z) = \pZ(z)$.
This is because there is a parameter cut
between $\pZ \cdot p_{\YIZX}$ and $p_{\XIZ}$ \citep[see, e.g.][]{bn:78},
and the parameters $\thetaYX^*$ and $\phiYZIX^*$ are (for MSMs)
functions of this first quantity.  



For results connected with fitting, we will assume that all
our parameters are identifiable from the available data (cf.\ Remark \ref{rmk:identifassump}).  In particular,
we will also make use of A\ref{ass:ac1} again, since we cannot hope to
recover a distribution that does not satisfy a positivity assumption.  Since
the result concerns maximum likelihood estimation, we will make the very
slightly stronger assumption that the Kullback-Leibler divergence between
$p$ and $p^*$ is finite.  (Note that this is a strictly weaker assumption
than A\ref{ass:tails}.)
\begin{enumerate}[{A}1.]
\addtocounter{enumi}{4}

\item \label{ass:kl} $\KL(\pZX \operatorname{\|} \pZX^*) := \E_{\pZX} \log \frac{\pZX(Z,X)}{\pZX^*(Z,X)} < \infty$.
\end{enumerate}
We refer to the parameters of the causal parameterization of
the observational distribution as $\theta^* = (\thetaZX, \thetaYX^*, \phiYZIX^*)$,
and of the causal distribution as $\eta(\theta^*) := (\eta_\ZX(\thetaZX), \thetaYX^*, \phiYZIX^*)$.

\begin{thm} \label{thm:fitting}
Suppose that $\theta^*$ is a frugal parameterization with weight function 
$w(z) = \pZ(z)$,
so the model we are
interested in is the marginal structural model; suppose also that A\ref{ass:kl} holds.
The maximum likelihood estimator $\hat \eta$ of $\eta(\theta^*)$ obtained with
the observed data (i.e.\ data generated using the distribution $\pZXY$ with
parameters $\theta^* = (\theta_{\ZX},\thetaYX^*, \phiYZIX^*)$)
will be consistent for the
distribution in the causal model with parameters $\eta = (\eta_\ZX, \thetaYX^*, \phi_\YZIX^*)$.


In addition, for the estimates obtained in this way, we have
\begin{align*}
\sqrt{n}\left\{\left(\begin{matrix}
\hat{\theta}_\YIX^* \\
\hat{\phi}_\YZIX^*\end{matrix}\right) -
\left(\begin{matrix}
{\theta}_\YIX^* \\
{\phi}_\YZIX^*\end{matrix}\right)\right\}
\stackrel{d}{\longrightarrow} \;  N \! \left(0, \, I(\theta^*)_{\thetaYX^*, \phiYZIX^*}^{-1} \right),
\end{align*}
where $I(\theta^*)$ is the Fisher information under $\pZXY$ and
$I(\theta^*)_{\thetaYX^*, \phiYZIX^*}^{-1}$ is the submatrix of its inverse
relating to $\thetaYX^*$ and $\phiYZIX^*$.
\end{thm}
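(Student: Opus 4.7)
The plan is to reduce the claim to a standard maximum likelihood consistency and asymptotic normality argument, by exploiting the special structure of the causal parameterisation under a marginal structural model.

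First I would use the fact that, since $w(z)=\pZ(z)$, the causal density factorises as
\begin{align*}
\pZXY^*(z,x,y;\eta) = \pZ(z;\eta_\ZX) \cdot \pX^*(x) \cdot p_{\YIZX}(y \cmid z,x; \eta),
\end{align*}
while $\pZXY = \pZX \cdot p_{\YIZX}$ with the \emph{same} marginal $\pZ$ (a consequence of $w = \pZ$) and the \emph{same} conditional $p_{\YIZX}$ at the true parameter values. Hence the observational and causal distributions agree on $\pZ$ and $p_{\YIZX}$; only the joint $\pZX$ differs.

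Next I would show that the expected causal log-likelihood under $\pZXY$ is uniquely maximised at $\eta(\theta^*)$. Taking expectations gives
\begin{align*}
\E_{\pZXY}[\log \pZXY^*(Z,X,Y;\eta)] = \E[\log \pZ(Z;\eta_\ZX)] + \E[\log \pX^*(X)] + \E[\log p_{\YIZX}(Y \cmid Z,X;\eta)].
\end{align*}
The middle term is constant in $\eta$; by Gibbs' inequality the first term is uniquely maximised when $\pZ(\cdot;\eta_\ZX)$ equals the true $\pZ$, and the conditional Gibbs' inequality gives that the third term is uniquely maximised when $p_{\YIZX}(\cdot;\eta)$ equals the true conditional. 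By construction of the frugal parameterisation both maxima are achieved simultaneously at $\eta=\eta(\theta^*)$, and assumption A\ref{ass:kl} guarantees the expectations are finite so that the usual identifiability considerations apply. Consistency of $\hat\eta$ then follows from standard uniform-convergence results for M-estimators, using the smoothness and regularity of the frugal parameterisation established in Theorem \ref{thm:main}.

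For asymptotic normality, I would Taylor expand the score of the causal log-likelihood around $\eta(\theta^*)$. The key observation is that at $\eta(\theta^*)$ the causal model correctly represents both $\pZ$ and $p_{\YIZX}$, so the score equations are unbiased under $\pZXY$, and no sandwich-type correction arises. The score components with respect to $\thetaYX^*$ and $\phiYZIX^*$ differentiate $\log p_{\YIZX}$ along directions that leave $\eta_\ZX$ (equivalently $\thetaZX$) fixed, so they coincide at the truth with the corresponding scores in the observational parameterisation. The variation independence (or parameter cut) between $\thetaZX$ and $(\thetaYX^*,\phiYZIX^*)$ then yields the block structure that collapses the asymptotic variance to the stated submatrix of $I(\theta^*)^{-1}$.

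The main obstacle, as I see it, is carefully verifying that fitting the causal likelihood to observational data does not introduce a misspecification-type correction for the parameters of interest. This reduces to a chain-rule computation exploiting the bijective smooth relationship between $\thetaZX$ and $\eta_\ZX$ (guaranteed by A\ref{ass:smoo2}) and the variation independence of the association parameter (A\ref{ass:vi2}) from the other components, together with the fact that the absolute continuity in A\ref{ass:ac1}--A\ref{ass:kl} ensures the likelihood ratios relating the two measures are well-behaved.
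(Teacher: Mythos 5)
Your proposal is correct and follows essentially the same route as the paper: both rest on the observation that $\pZXY$ and $\pZXY^*$ share $\pZ$ and $p_{\YIZX}$ so that only the $(Z,X)$-marginal is misspecified, identify the limit of the MLE as the KL-closest (equivalently, expected-log-likelihood-maximising) causal distribution, and then invoke the parameter cut between $\pZ \cdot p_{\YIZX}$ and the rest to obtain the block structure giving the stated submatrix of $I(\theta^*)^{-1}$. The only cosmetic difference is that you apply Gibbs' inequality termwise to the additive decomposition of the expected log-likelihood, whereas the paper cites \citet[Lemma 5.35]{vandervaart:98} and shows $\KL(\pZXY \| \pZXY^*) = \KL(\pZX \| \pZX^*)$ directly.
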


\begin{proof}
\citet[][Lemma 5.35]{vandervaart:98} shows that if the target distribution is identifiable,
then maximum likelihood estimation converges to the KL-minimizing distribution.
Consider the density for the causal model:
\begin{align*}
\pZXY^*(z, x, y) &= \pX^*(x) \, w(z) \, \pYIZX(y \cmid z, x),
\end{align*}
where we suppress dependence upon parameters.
For a comparison with the density of the data, note that
\begin{align*}
\frac{\pZXY(z, x, y)}{\pZXY^*(z, x, y)} 
= \frac{\pZX(z, x)}{\pX^*(x) \cdot w(z)} = \frac{\pZX(z, x)}{\pZX^*(z,x)},
\end{align*}
and hence the KL-divergence is finite by A\ref{ass:kl}.  Then,
\begin{align*}
\KL(\pZXY \operatorname{\|} \pZXY^*) &= \int_{\Z\X\Y} \pZXY(z,x,y) \log \frac{\pZX(z, x)}{\pZX^*(z, x)} \, dz \,dx \, dy\\
&= \int_{\Z\X} \pZX(z,x) \log \frac{\pZX(z, x)}{\pZX^*(z, x)} \, dz \,dx\\
&= \KL(\pZX \operatorname{\|} \pZX^*).
\end{align*}
Now, in general the result of minimizing this expression will depend upon
the precise parameterization of $\pZX^*$, but the minimization will
pick out the distribution that is `closest' to $\pZX$ within the causal
model.  The result for marginal structural models is a consequence
of the fact that the minimizing distribution in this case is $\pX \cdot \pZ$.
%

For the asymptotic distribution of the estimators $\hat\theta_\YIX^*$ and
$\hat\phi_\YZIX^*$, notice that for a marginal structural model we have $\pZ = \pZ^*$
(and of course we always have $p_\YIXZ=p_\YIXZ^*$) so the parameter cut mentioned
above applies to both models.  Hence, there is no asymptotic correlation between
$(\hat\theta_\YIX^*, \hat\phi_\YZIX^*)$ and $\hat\theta_\ZX$ (or $\hat\eta_\ZX$).
Then the asymptotic variance is just a
standard result for MLEs \citep[see, e.g.][Chapter 18]{ferguson:96}.  
\end{proof}

Note
that we \emph{must} apply the Fisher information under $\pZXY$ in order to
obtain the correct variance, since this is the distribution of the data being
used to approximate the expectation.
While the proof above is stated only for the single time-point exposure model, 
it extends to a longitudinal case with multiple treatments, 
similar to the obvious extension of the model in our running example.

When computing standard errors in practice we use the observed information (i.e.\ an
empirical approximation to the Fisher Information), rather than its theoretical
mean. 
In principle we could also use a `sandwich estimate' to obtain more robust
standard errors; because we know that our models are correct we do not do this,
but for other users of this method on real data we would always recommend
using sandwich errors.  In our case these would be the square-roots of the 
diagonal entries of
\begin{align*}
B(\theta^*)^{-1} A(\theta^*) B(\theta^*)^{-1},
\end{align*}
where
\begin{align*}
&& A(\theta^*) &= \E_{\theta^*} \frac{\partial \ell}{\partial \eta}\frac{\partial \ell}{\partial \eta}^T &\text{and}&& B(\theta^*) &= \E_{\theta^*} \frac{\partial^2 \ell}{\partial \eta^2}. &&
\end{align*}
Note that although this result shows that we \emph{can} fit models via
maximum likelihood estimation, if the model is misspecified
there is no guarantee that the estimator will be consistent or even close 
to the true value. 
Other less sensitive estimators, such as doubly robust approaches (see 
Remark \ref{rmk:dr} below), may therefore be more useful in practice than the MLE.

\begin{rmk}
Note that the same result (i.e.\ convergence of the estimator to
the KL closest distribution to $\pZ$) will hold for the ETT estimator with kernel
$w(z) = p_\ZIX(z \cmid 1)$, since this is also independent of the value of $X$.  In order
to estimate the parameters for this kernel, we would have to consider the
subset of data for which $X$ takes the particular value 1.  We could then
obtain an MLE for the whole model by combining the complete data estimator 
with the separate estimate for $w$ obtained from the treated patients.
\end{rmk}

\begin{rmk}
Given maximum likelihood estimates for the parameterization of $\pZXY$, we 
can of course use the invariance properties of MLEs together with 
the delta method for the standard errors, to obtain an estimate for any 
(differentiable) function of the parameters that we choose. 
\end{rmk}

\begin{rmk} \label{rmk:dr}
Taking a \emph{doubly robust} approach to estimating the causal 
parameters,  we see that if $\phiYZIX(y,z\cmid x) := c_{UV|X}(F_{\YIX}(y \cmid x), F_{\ZIX}(z \cmid x) \cmid x)$ is a copula density, then
\begin{align*}
\pZ(z) \cdot \pYIZX(y \cmid z, x) &= \pZ(z) \cdot \pYX^*(y \cmid x) \cdot \phiYZIX^*(y, z \cmid x)\\
\hspace{-4.3cm} \text{and therefore} \hspace{2.2cm}
\pYIZX(y \cmid z, x) &= \pYX^*(y \cmid x) \cdot \phiYZIX^*(y, z \cmid x),
\end{align*}
so $\hat{Q}(z,x) = \E[Y \cmid Z=z, X=x]$ can fairly easily be computed numerically; 
indeed, if $Y$ and the copula are both Gaussian, we obtain it in closed form.  
We then fit a model, say $\hat\pi(x \cmid z)$, for the propensity score $p_\XIZ(x \cmid z)$.  

A doubly robust estimator uses $\hat{Q}$ and $\hat{\pi}$ to construct 
an estimating equation, and will give a 
consistent estimate for the causal parameter if either model
is correctly specified.
If they are both correct, then 
this estimator is also semiparametric efficient \citep{scharfstein:99}.
Using a doubly robust approach to compare with the MLE will help to protect 
us against possible misspecification of $\phiYZIX^*$; this is useful given 
that choosing the association parameter is not particularly intuitive.
\end{rmk}

\subsection{Simulation} \label{sec:sim_study}

We now run a simulation to compare four methods:
outcome regression, inverse probability weighting, our maximum likelihood
estimation, and standard doubly robust estimation (i.e.~just using an
ordinary regression model, not as described in Remark \ref{rmk:dr}). 

We use the setup described in Examples R\ref{exm:run2} and R\ref{exm:run4} (Sections \ref{sec:param} and \ref{sec:cop} respectively) to generate our
data, so again $Y$ (after intervening to set $\{A=a,B=b\}$) is normally distributed with
mean $-0.5 + 0.2a + 0.3b$ and variance 1.
We then performed $N=1\,000$ runs of the analysis above with sample size $n=250$.
The results are shown in Table \ref{tab:comp2}, with boxplots of the biases
in Figure \ref{fig:bias_plots}.  The table contains the average bias, the
empirical coverage of a 90\% interval, and the \emph{standard error calibration},
which we define as:
\begin{align*}
\operatorname{sec} = \left(\frac{1}{N} \sum_{i=1}^N \frac{\operatorname{bias}(\hat\theta_i-\theta)^2}{\operatorname{se}(\hat\theta_i)^2} \right)^{1/2}.
\end{align*}
If this value is less than one it suggests that the standard errors are
conservative, if larger than one it suggest they are too small.

Outcome regression performs poorly, although this is to be expected as the model is misspecified. 
We see that the other three
methods all have very comparable performance and efficiencies, and are
mostly well calibrated: the MLE and DR methods give slight under coverage
for the first two parameters, though the double robust method gives
conservative standard errors for the interaction parameter.
An example on a larger simulated dataset is given in Appendix \ref{sec:sim2}.

\begin{table}
\begin{center}
\begin{tabular}{lrrrrrr}
\toprule
\multicolumn{1}{c}{ } & \multicolumn{3}{c}{Outcome Reg.} & \multicolumn{3}{c}{IP Weighting} \\
\cmidrule(l{3pt}r{3pt}){2-4} \cmidrule(l{3pt}r{3pt}){5-7}
coef & bias & cover90 & se calib & bias & cover90 & se calib\\
\midrule
1 & $-$0.0769 & 0.837 & 1.21 & 0.0038 & 0.905 & 0.99\\
a & $-$0.0303 & 0.880 & 1.03 & $-$0.0096 & 0.932 & 0.93\\
b & 0.1538 & 0.755 & 1.33 & $-$0.0018 & 0.935 & 0.91\\
a.b & 0.0220 & 0.901 & 1.00 & 0.0038 & 0.942 & 0.85\\
\midrule
\multicolumn{1}{c}{ } & \multicolumn{3}{c}{Double Robust} & \multicolumn{3}{c}{MLE} \\
\cmidrule(l{3pt}r{3pt}){2-4} \cmidrule(l{3pt}r{3pt}){5-7}
coef & bias & cover90 & se calib & bias & cover90 & se calib\\
\midrule
1 & 0.0046 & 0.879 & 1.06 & 0.0046 & 0.882 & 1.06\\
a & $-$0.0098 & 0.876 & 1.08 & $-$0.0071 & 0.891 & 1.03\\
b & $-$0.0014 & 0.919 & 0.97 & $-$0.0026 & 0.898 & 1.02\\
a.b & 0.0054 & 0.982 & 0.69 & 0.0040 & 0.893 & 1.01\\
\bottomrule
\end{tabular}
\caption{Table giving the average bias, coverage of a 90\% confidence interval,
and standard error calibration (the ratio of absolute bias to standard error)
of four methods:
outcome regression; inverse probability (IP) weighting; a doubly robust estimator; 
and maximum likelihood estimation (MLE).}
\label{tab:comp2}
\end{center}
\end{table}

\begin{figure}
\includegraphics[width=15cm]{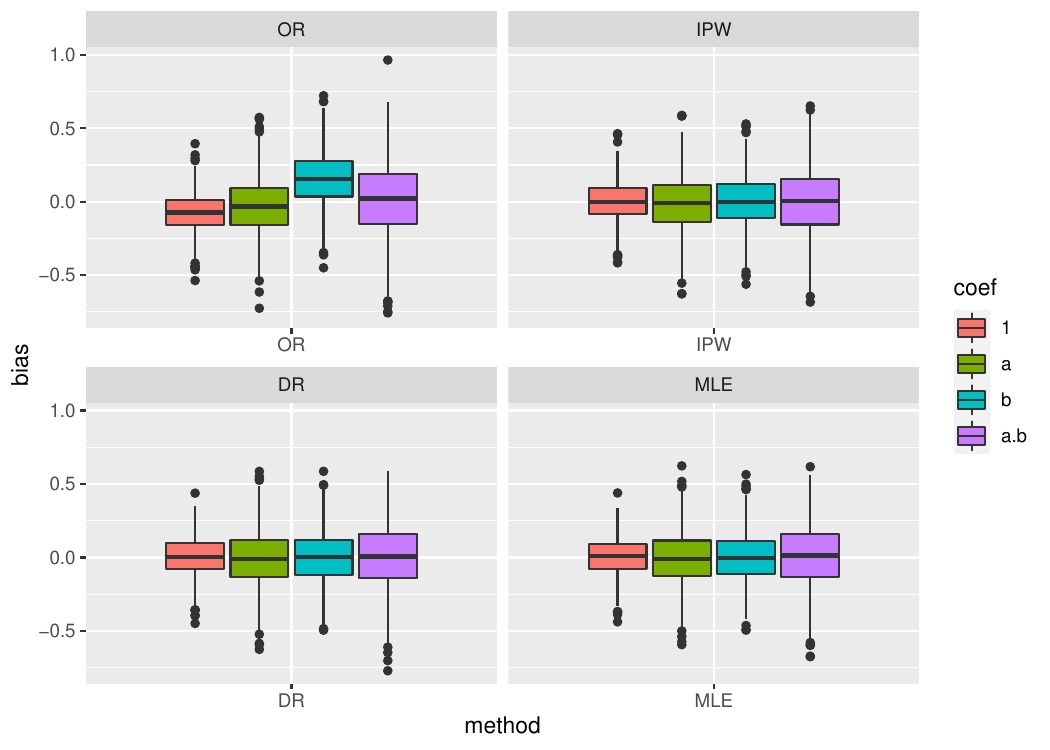}

\caption{Boxplots of the bias for each coefficient by four methods: 
outcome regression (OR), inverse probability weighting (IPW), 
doubly robust estimation (DR), and maximum likelihood estimation (MLE).}
\label{fig:bias_plots}
\end{figure}

\subsection{Data Analysis} \label{sec:data}


To illustrate our method, we apply the maximum likelihood fitting
procedure to data from the IDEFICS study \citep{ahrens11idefics}.  The subset of data we use consists of
measurements of 531 German children aged between
2 and 9, including their sex, physical activity, screen time, parental
education, a `vegetable score', fibre intake, and a polygenic
risk score (PRS) for BMI.
The study also records the child's BMI and their parents' BMIs.
Preliminary analyses suggest that increased fibre intake can reduce
BMI, especially for those children who have a strong genetic predisposition
for obesity \citep{huls2021}.
Our aim is to study the effect modification of PRS on the
relationship between fibre intake and actual BMI, whilst
adjusting for confounding due to other covariates.


We replicate the setting in \citet{nohren:msc}, which considers how the 
causal effect of a dichotomized indicator of fibre intake ($X$)
on age and sex standardized BMI ($Y$, a z-score) interacts with
the dichotomized polygenic risk score ($C$); like \citeauthor{nohren:msc}
we also use a marginal structural model:
\begin{align*}
\E [Y \cmid C=c; \Do(X=x)] = \beta_0 + \beta_1 x + \beta_2 c + \beta_3 cx.
\end{align*}
We assume that all other variables are causally prior to
$X$, so that $\E [Y \cmid C=c; \Do(X=x)]$ is our causal
distribution of interest,
where $\bs Z$ consists of other confounders; these include sex, age,
physical activity, screen time, vegetable score and a dichotomized
version of parental education level.  

We choose an ordinary Gaussian linear model for the MSM,
and also the other models used for variables in $\bs Z$.  The
copula was also Gaussian.  Note that in order to accommodate sex
and parental education as binary
variables it was necessary to integrate over the copula, 
effectively making it a probit model (see also Example \ref{exm:mix}).

The relevant coefficients from the model fit are shown in Table
\ref{tab:BIPS}.
Under our modelling assumptions, these results do not suggest that 
increased fibre intake reduces BMI and thus we cannot confirm previous
results obtained on a larger dataset of 2,688 children from seven countries
\citep{nohren:msc}; the estimates from that study are within
our (rather wide) confidence intervals, though.  Furthermore, the 
analysis of \citeauthor{nohren:msc} for the marginal structural
model using inverse probability weighting on only the German data yields 
slightly different parameter
estimates, and larger standard errors (see Appendix \ref{sec:data2}
for details). This illustrates---in a practical analysis---the differences 
between, on the one hand, modelling the $C$-$Y$-$\bs Z$ association directly or, 
on the other hand, modelling the propensity score for the inverse 
probability weights.

\begin{table}
\begin{center}
\begin{tabular}{cc|cc|cc}
\toprule
param. & coefficient & est. & s.e. & \multicolumn{2}{c}{95\% conf.\ int.} \\
\midrule
$\beta_1$ & fibre & 0.049 & 0.092 & $-$0.132 & 0.230 \\
$\beta_2$ & PRS & 0.374 & 0.198 & $-$0.015 & 0.762 \\
$\beta_3$ & PRS:fibre & 0.011 & 0.359 & $-$0.693 & 0.715 \\
\bottomrule
\end{tabular}
\caption{Table giving estimated coefficients in the marginal structural
model for effect modification of the PRS on BMI by fibre intake.}
\label{tab:BIPS}
\end{center}
\end{table}

\section{Survival Models} \label{sec:survival}

Another application of the frugal parameterization is to causal longitudinal models, 
and in particular to survival models. 
Note that with sequences of treatment variables, the sequential versions of identifying assumptions must be met (cf.\ Remark \ref{rmk:identifassump}) known as \emph{sequential conditional exchangeability}; we continue to take these as given in Sections 6 and 7.

The following corollary of Theorem \ref{thm:main} allows us to `build up' a frugal
parameterization of the joint distribution using several different cognate quantities. 
Given a collection of variables $Y_1, \ldots, Y_d$ under 
some natural ordering (typically a temporal ordering),
let $[i-1] = \{1, \ldots, i-1\}$ denote the
predecessors of each $i =1, \ldots, d$.

\begin{cor} \label{cor:build}
Let $Y_1, \ldots, Y_d$ have joint density $p$, and let 
$X_i:=Y_{A_i} = \{Y_j : j \in A_i\}$ for
some $A_i \subseteq [i-1]$.
Also let
$p_{Y_i|X_i}^*(y_i \cmid x_{i})$ be defined by applying (\ref{eqn:cognate}) 
with $Z_i := Y_{L_i}$ where $L_i := [i-1]\setminus A_i$. 

Then there is a smooth and regular parameterization of the joint distribution,
which can be chosen to be variation independent, containing each 
$p_{Y_i|X_i}^*(y_i \cmid x_{i})$.
\end{cor}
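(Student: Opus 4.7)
The plan is to proceed by induction on $i = 1, \ldots, d$, using Theorem \ref{thm:main} repeatedly to extend a frugal parameterization one variable at a time.

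For the base case ($i = 1$), we have $X_1 = Z_1 = \emptyset$, so the cognate $p^*_{Y_1\mid X_1} = p_{Y_1}$, and we simply pick a smooth, regular parameterization $\theta^*_{Y_1\mid X_1}$ of the marginal distribution of $Y_1$; no association parameter is needed at this step. For the inductive step, suppose that at stage $i$ we have a smooth, regular (and, if we wish, variation independent) parameterization of the joint density $p(y_1, \ldots, y_{i-1})$ that contains $\theta^*_{Y_j\mid X_j}$ and $\phi^*_{Y_j Z_j\mid X_j}$ for all $j < i$. Since $(Z_i, X_i) = Y_{[i-1]}$, the marginal distribution $p_{Z_i X_i}$ is determined by this parameterization; I will denote its induced parameterization by $\theta_{Z_iX_i}$.

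To extend to $Y_i$, I would apply Theorem \ref{thm:main} with the roles $(Z, X, Y) = (Z_i, X_i, Y_i)$, taking the default weight $w_0 = p_{Z_i\mid X_i}$ as the kernel associated with the ordinary conditional distribution $p_{Y_i\mid X_i}$, and the user-supplied kernel $w$ implicit in the definition of the cognate $p^*_{Y_i\mid X_i}$. Specifying a smooth, regular parameterization $\theta^*_{Y_i\mid X_i}$ for the cognate and a conditional association parameter $\phi^*_{Y_i Z_i\mid X_i}$ (e.g.\ an odds ratio or copula, per Propositions \ref{prop:disc}--\ref{prop:cont}) then determines, via the equivalence direction of Theorem \ref{thm:main}, a smooth, regular parameterization of the conditional distribution $p(y_i \mid z_i, x_i)$. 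Multiplying by the previously parameterized $p(y_1, \ldots, y_{i-1})$ gives a smooth, regular parameterization of $p(y_1, \ldots, y_i)$ that now includes the desired cognate at level $i$. Iterating through $i = d$ yields the required parameterization of the full joint.

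For variation independence, at each stage the newly introduced parameters $(\theta^*_{Y_i\mid X_i}, \phi^*_{Y_i Z_i\mid X_i})$ are variation independent of $\theta_{Z_iX_i}$ by (the A\ref{ass:vi2} conclusion of) Theorem \ref{thm:main}; since $\theta_{Z_iX_i}$ is a function of the parameters from the earlier stages, the newly added pair is variation independent of all earlier parameters jointly, and a straightforward induction gives global variation independence.

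The main obstacle is verifying that assumptions A\ref{ass:smoo2} and A\ref{ass:ac1} hold at each stage so that Theorem \ref{thm:main} may be invoked. A\ref{ass:ac1} holds by the standing positivity of the joint density, and A\ref{ass:smoo2} is satisfied as soon as the analyst's choice of $p^*_{X_i}$ and $w$ yields a smooth $\eta_{Z_iX_i}$; both are stated as assumptions on the inputs to the construction rather than things that need to be proved, so the induction goes through cleanly. A secondary technicality is checking that the parameterization of $p_{Z_iX_i}$ inherited from earlier stages is smooth and regular, but this is automatic since finite products of smooth, regular conditional parameterizations remain smooth and regular.
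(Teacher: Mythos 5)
Your proof is correct and takes essentially the same approach as the paper's: induction on $i$, invoking Theorem \ref{thm:main} at each stage with the previously parameterized joint density of $Y_1,\ldots,Y_{i-1}$ playing the role of ``the past'' $(Z_i,X_i)$, and an appropriate $\phi^*_{Y_iZ_i\mid X_i}$ completing the frugal parameterization. Your handling of the base case, of assumptions A\ref{ass:smoo2}--A\ref{ass:ac1}, and of the variation-independence bookkeeping is more explicit than the paper's (which is very terse), but the underlying argument is identical.
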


\begin{proof}
We proceed by induction.  For $i=1$ we just have a smooth and regular parameterization
of $p_{Y_1}(y_1)$.  For a general $i$, assume we have a smooth parameterization of
the joint density for $Y_1,\ldots, Y_{i-1}$ and of $p_{Y_i|X_i}^*(y_i \cmid x_{i})$.  Then 
using some appropriate $\phi_{Y_iZ_i \mid X_i}^*$ to make up a frugal 
parameterization (and A\ref{ass:vi2} if required),
by Theorem \ref{thm:main} we obtain a smooth and regular parameterization 
of the joint density for $Y_1,\ldots, Y_{i}$, and---if A\ref{ass:vi2} holds---the 
quantities used are all variation independent of one another.
\end{proof}

We refer to this approach as a \emph{recursive} or \emph{nested}
frugal parameterization, because in each case `the past' (i.e.~$\pZX$)
is itself parameterized in a frugal manner. 






\begin{exm} \label{exm:young}
\citet{young:14} consider survival models with
time-varying covariates and treatments.  Let
$Y_t = 0$ be an indicator of survival up to time
$t$ (with $Y_t = 1$ indicating failure).  Let
$L_t, A_t$ be respectively covariates and treatment
at time $t = 0, \ldots, T$.
 \citeauthor{young:14} model the quantities
\begin{align*}
P(Y_t = 0 \cmid Y_{t-1} = 0; \Do(a_1, \ldots, a_{t-1})), && t=1,\ldots,T;
\end{align*}
i.e.\ probability of survival to the next time point given treatment history
and survival so far. 
Under their assumptions these quantities are
identifiable via the g-formula as
\begin{align}
p(y_t \cmid y_{t-1}; \Do(\overline{a}_{t-1})) = \sum_{\overline{\ell}_t} p(y_t \cmid y_{t-1}, \overline{a}_{t-1}, \overline{\ell}_t) \prod_{s=1}^t p(\ell_s \cmid \overline{a}_{s-1}, \overline{\ell}_{s-1}); \label{eqn:gform_yt}
\end{align}
note that we omit some subscripts on densities for brevity.
Corollary \ref{cor:build} tells us that, setting $X_t = \overline{A}_t$ and
$Z_t = \overline{L}_t$, a parameterization exists of
the joint distribution that uses these quantities for each $t=1,\ldots,T$.
Given the distribution of $p(\overline{a}_{t-1}, \overline{\ell}_{t-1}, \overline{y}_{t-1})$, the
quantities
\begin{align*}
p(y_t \cmid y_{t-1}; \Do(\overline{a}_{t-1})) && \phi^*_{Y_t \overline{L}_{t-1} | \overline{A}_{t-1} \overline{Y}_{t-1}}(y_t, \overline{\ell}_{t-1} \cmid \overline{a}_{t-1}, \overline{y}_{t-1})
\end{align*}
may be used to recover $p(\overline{a}_{t-1}, \overline{\ell}_{t-1}, \overline{y}_{t})$.
\end{exm}

\citet{young:14} note that simulation from this model is difficult for certain
parametric choices, because some parameters from the joint model
and the marginal model are tied together in complicated ways.
They derive results that allow them to compute
particular causal parameters as functions of the joint distribution, and
hence to evaluate the performance of simulation methods exactly.
Our approach overcomes this problem by allowing causal quantities
of interest to be specified explicitly, and then have the rest of the
distribution constructed around them.

The model is parameterized so that failure is a rare outcome,
which allows approximation of the expit function by an exponential
function.  The parameters of interest are then those of the Cox
Marginal Structural Model:
\begin{align*}
\frac{p_{Y_t|\overline{A}_tY_{t-1}}^*(1 \cmid \Do(\overline{a}_{t}), Y_{t-1} = 0)
}{p_{Y_t|\overline{A}_tY_{t-1}}^*(1 \cmid \Do(\overline{0}_{t}), Y_{t-1} = 0)} &= e^{\gamma(t, \overline{a}_t)} = \exp\left(\psi_0 a_t + \psi_1 a_{t-1} + \psi_{01} a_t a_{t-1}  \right), 
\end{align*}
which, as we  see above, the authors assume to depend only upon the previous
two treatments.
These parameters $\psi$ are estimated by fitting an inverse weighted GLM to the data.

%

The authors also state that:
`[we] therefore, may be limited to simulation scenarios with the
proposed algorithm to particularly unrealistic settings if we wish
simultaneously to generate data under the null.'  Our results
demonstrate that if one uses our algorithms this is \emph{not} the case.
The null in this example corresponds to $\psi_0 = \psi_1 = \psi_{01} = 0$;
since the model is discrete we are free to choose arbitrary regression models
for the treatment on the observed past, for the covariates on their past
values and treatments (and even unobserved quantities), and
any arbitrary dependence structure between survival and the covariates,
conditional on all previous treatments and covariates.  This will allow
us to simulate from \emph{any} distribution under which treatment has no
(marginal) causal effect upon survival.
In Appendix \ref{sec:young} we perform some simulations on this
model.

\section{Structural Nested Model Parameterizations} \label{sec:snm}

Not all causal parameterizations involve modelling the entire conditional 
distribution for every level of the conditioning variable; i.e.\ quantities 
of the form $\pYX^*(y \cmid x)$ for every value of $x \in \X$.  The \emph{structural 
nested models} of \citet{robins:tsiatis:91} are an example of this.  These 
allow for interactions
between time-varying covariates and time-varying treatments, but they are always
marginal over future covariates; this makes them considerably more
flexible than marginal structural models, because they allow for dependence in
treatment decisions on all observed data.	 We again continue to make the necessary
assumptions for identifiability; see \citet{robins:tsiatis:91} for more detail. 

\begin{exm}[Structural Nested Models]
Suppose we have a sequence of binary treatments
$\Xt_1, \ldots, \Xt_T$ and time-varying covariates
$\Zc_1, \ldots, \Zc_T$, together with an outcome $Y$.
Let $\overline{\Zc}_t \equiv (\Zc_1, \ldots, \Zc_t)$
and $\underline{\Zc}_t \equiv (\Zc_t, \ldots, \Zc_T)$,
and similarly for $\overline{\Xt}_t$, $\underline{\Xt}_t$.
The \emph{structural nested model}
\citep{robins:tsiatis:91}
involves contrasts between $\xt_t = 0, 1$ of the form:
\begin{align*}
&p_{Y|\overline{\Zc}_t\overline{\Xt}_T}(y \cmid \overline{\zc}_t, \overline{\xt}_{t-1}; \Do(\xt_t, \underline{\xt}_{t+1} = 0)),
&& \forall \, \overline{\zc}_t, \overline{\xt}_{t-1}, t=0, \ldots, T. 
\end{align*}
The parameterization divides the effect of the treatments into
pieces corresponding to `blips' of effect at each time point: that is,
at each time $t$, we consider the effect of receiving treatment at
that time but no further treatment, versus never receiving any
treatment from time $t$ onwards.
The contrast may be
in the form of a risk difference, risk ratio or other suitable
quantity.  
%

We represent such a generic contrast by introducing a tilde above the variable being
contrasted; in the above example we would write:
\begin{align}
&p_{Y|\overline{\Zc}_t\overline{\Xt}_T}(y \cmid \overline{\zc}_t, \overline{\xt}_{t-1}; \Do(\widetilde{\xt}_t, \underline{\xt}_{t+1} = 0)),
&& \forall \, \overline{\zc}_t, \overline{\xt}_{t-1}, t=0, \ldots, T. \label{eqn:blip2}
\end{align}
See the more formal Definition \ref{dfn:basecont} below.
\end{exm}

We define two additional kinds of parameter to generalize these ideas.

\begin{dfn} \label{dfn:basecont}
Let $q_{\YIXZ}(y \cmid x, z)$ be a conditional distribution.
We denote by
$q_\YIXZ(y \cmid x^0, z)$ a \emph{baseline parameter}, which can
smoothly recover the relevant conditional distribution at
a particular baseline value $X=x^0$.

We will denote
by $q_\YIXZ(y \cmid \widetilde{x}, z)$ a \emph{contrast}
parameter (over $X$).
We define the pair of baseline and contrast parameters to be a
\emph{full} parameterization if, when we combine them,
we can smoothly recover all of $q_\YIXZ(y \cmid x, z)$.
\end{dfn}

In the appendix we give Lemma \ref{lem:basecont}, showing we can use risk differences,
risk ratios or odds ratios as contrast parameters, if $p > 0$ and each $X_t$ is 
binary.  
Examples of a set of baseline parameters might be $(\beta_0, \beta_z, \sigma^2)$
for some regression model $y = \beta_0 + \beta_x x + \beta_z z + \varepsilon$,
where $\Var \varepsilon = \sigma^2$; the natural contrast parameter would then be
$\beta_x$.  Alternatively it might  be the density $\pYIXZ(y \cmid x^0, z)$,
$y \in \Y, z \in Z$, for some value $x^0 \in \X$; the contrast parameter could
then be a risk ratio:
\begin{align*}
\pYIXZ(y \cmid \tilde{x}, z) \equiv \frac{\pYIXZ(y \cmid x, z)}{\pYIXZ(y \cmid x^0, z)} \quad \text{for all } x \in \X, y \in \Y, z \in \Z.
\end{align*}


\subsection{Iterated Frugal Parameterization}

How can we use the frugal parameterization to obtain the structural nested
model?  We now introduce the iterated frugal parameterization
to allow us to do just that.  

Consider a sequence of random variables $\Zc_1, \Xt_1, \Zc_2, \ldots, \Zc_T, \Xt_T$ 
and an outcome of interest $Y$.  Assume also that there is a natural 
`baseline' treatment level $\Xt_i = \xt_i^0$.  Then the \emph{iterated frugal 
parameterization} consists of a parameterization of `the past' (i.e.~$\pZX$), 
of $p^*_{Y|\Zc_1\overline{\Xt}_T}(y \cmid \zc_1, \overline{\xt}^0_T)$, 
and the following quantities:
\begin{align*}
\left.
\begin{array}{l}
p^*_{Y|\overline{\Zc}_{t}\overline{\Xt}_{T}}
(y \cmid \overline{\zc}_{t}, \overline{\xt}_{t-1}, \widetilde{\xt}_t, \underline{\xt}^0_{t+1})\\[6pt]
\phi_{Y\!\Zc_{t+1}|\overline{\Zc}_t\overline{\Xt}_t}^*
(y, \zc_{t+1} \cmid \overline{\zc}_{t}, \overline{\xt}_{t})
\end{array}
\right\}
\qquad
\begin{array}{l}
\forall y, \overline{\zc}_{T}, \overline{\xt}_{T}\\
t=1,\ldots,T,
\end{array}
\end{align*}
where the parameters can be used to obtain $p^*_{Y|\overline{\Zc}_{t}\overline{\Xt}_{T}}
(y \cmid \overline{\zc}_{t}, \overline{\xt}_{t-1}, \xt^0_t, \underline{\xt}^0_{t+1})$ such that
combined with $p^*_{Y|\overline{\Zc}_{t}\overline{\Xt}_{T}}
(y \cmid \overline{\zc}_{t}, \overline{\xt}_{t-1}, \widetilde{\xt}_t, \underline{\xt}^0_{t+1})$
we obtain a `full' parameterization (for $p^*_{Y|\overline{\Zc}_{t}\overline{\Xt}_{T}}
(y \cmid \overline{\zc}_{t}, \overline{\xt}_{t}, \underline{\xt}^0_{t+1})$). 
Note that if we consider the contrast parameter to be all the possible values 
other than the baseline value, then each
state of $\overline{\xt}_T$ will appear on the right-hand side of a
quantity $p^*_{Y|\overline{\Zc}_{t}\overline{\Xt}_{T}}$ exactly once.

\subsection{The Structural Nested Model}

How can we use a parameterization that incorporates
all the quantities (\ref{eqn:blip2})?  
Based on the temporal ordering, 
and given
$p_{Y|\overline{\Zc}_t\overline{\Xt}_T}(y \cmid \overline{\zc}_t, \overline{\xt}_{t-1}; \Do(\xt_t, \underline{\xt}_{t+1} = 0))$
and
\begin{align*}
p_{\Zc_{t+1}|\overline{\Zc}_t\overline{\Xt}_T}(\zc_{t+1} \cmid \overline{\zc}_t, \overline{\xt}_{t-1}; \Do(\xt_t, \underline{\xt}_{t+1} = 0)) = p_{\Zc_{t+1}|\overline{\Zc}_t\overline{\Xt}_t}(\zc_{t+1} \cmid \overline{\zc}_t, \overline{\xt}_{t}),
\end{align*}
we  need
$\phi^*_{Y\!\Zc_{t+1}|\overline{\Zc}_t\overline{\Xt}_t}(y, \zc_{t+1} \cmid \overline{\zc}_t, \overline{\xt}_{t})$
to recover the joint
$p_{Y\!\Zc_{t+1}|\overline{\Zc}_t\overline{\Xt}_T}(y, \zc_{t+1} \cmid \overline{\zc}_t, \overline{\xt}_{t}; \Do(\underline{\xt}_{t+1} = 0))$.
Then notice
\begin{align*}
p(y, \overline{\zc}_{T} \cmid \overline{\xt}_{t}; \Do(\underline{\xt}_{t+1})) = p(y, \overline{\zc}_{T} \cmid \overline{\xt}_{t+1}; \Do(\underline{\xt}_{t+2})) \cdot \frac{p(\xt_{t+1} \cmid \overline{\xt}_{t})}{p(\xt_{t+1} \cmid \overline{\zc}_t, \overline{\xt}_{t})},
\end{align*}
so we can `change worlds' and obtain probabilities with the same
settings from a reweighting that is identifiable from the previous
variables.  The following proposition gives the general result, proved and illustrated
by examples in Appendix \ref{sec:snmm_pf}.


\begin{prop} \label{prop:snmm}
We can parameterize $p_{\overline{\Zc}_T\overline{\Xt}_TY}(\overline{\zc}_T, \overline{\xt}_T, y)$ using smooth and regular parameterizations for 
$p^*_{Y|\Zc_1\overline{\Xt}_T}(y \cmid \zc_1, \overline{\xt}^0_T)$ and
\begin{align*}
\left.
\begin{array}{l}
p_{\Zc_{t}\hspace{-.5pt}\Xt_t|\overline{\Zc}_{t-1}\hspace{-.5pt}\overline{\Xt}_{t-1}}
(\zc_{t}, \xt_t \cmid \overline{\zc}_{t-1}, \overline{\xt}_{t-1}) \\[6pt]
p^*_{Y|\overline{\Zc}_{t}\overline{\Xt}_{T}}
(y \cmid \overline{\zc}_{t}, \overline{\xt}_{t-1}, \widetilde{\xt}_t, \underline{\xt}^0_{t+1})\\[6pt]
\phi_{Y\!\Zc_{t+1}|\overline{\Zc}_t\overline{\Xt}_t}^*
(y, \zc_{t+1} \cmid \overline{\zc}_{t}, \overline{\xt}_{t})
\end{array}
\right\}
\qquad
\begin{array}{l}
\forall y, \overline{\zc}_{T}, \overline{\xt}_{T}\\
t=1,\ldots,T,
\end{array}
\end{align*}
where each $p^*_{Y|\overline{\Zc}_t\overline{\Xt}_T}$ is cognate 
for the particular baseline $\underline{\xt}^0_{t+1}$. 
In particular, our parameterization can include `blips' such as those in (\ref{eqn:blip2}).  If either the contrast parameter is the odds ratio, or the
risk ratio and the outcome is positive and unbounded, then these pieces are also
variation independent.
\end{prop}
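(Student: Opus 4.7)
The plan is to proceed by forward induction on $t$, showing at each step that the given pieces are sufficient to recover $p^*_{Y|\overline{\Zc}_t\overline{\Xt}_T}(y \cmid \overline{\zc}_t, \overline{\xt}_t, \underline{\xt}^0_{t+1})$ for all values of $\overline{\xt}_t$---that is, a full parameterization at time $t$ in the world where future treatments from $t+1$ onwards are fixed to baseline. Once we reach $t = T$ we have the full conditional $p_{Y|\overline{\Zc}_T\overline{\Xt}_T}$, which combined with the joint distribution over $(\overline{\Zc}_T, \overline{\Xt}_T)$ built from the conditional pieces $p_{\Zc_{t}\Xt_t|\overline{\Zc}_{t-1}\overline{\Xt}_{t-1}}$ yields the full joint distribution $p_{\overline{\Zc}_T\overline{\Xt}_T Y}$.

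For the base case $t=1$, the given baseline $p^*_{Y|\Zc_1\overline{\Xt}_T}(y \cmid \zc_1, \overline{\xt}^0_T)$ and contrast $p^*_{Y|\Zc_1\overline{\Xt}_T}(y \cmid \zc_1, \widetilde{\xt}_1, \underline{\xt}^0_2)$ together form a full parameterization for $\xt_1$ by Definition \ref{dfn:basecont} (appealing to Lemma \ref{lem:basecont} for odds or risk ratio contrasts). For the inductive step, given a full parameterization at time $t$, I would apply the frugal parameterization machinery of Theorem \ref{thm:main} with $Y$ in the role of the outcome, $\Zc_{t+1}$ in the role of the auxiliary variable $Z$, and the dependence measure $\phi^*_{Y\Zc_{t+1}|\overline{\Zc}_t\overline{\Xt}_t}$ (choosing either an odds ratio as in Proposition \ref{prop:disc} or a copula as in Proposition \ref{prop:cont}, depending on variable type) to reconstruct the joint $p^*_{Y\Zc_{t+1}|\overline{\Zc}_t\overline{\Xt}_t}(y, \zc_{t+1} \cmid \overline{\zc}_t, \overline{\xt}_t, \underline{\xt}^0_{t+1})$. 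Conditioning then yields $p^*_{Y|\overline{\Zc}_{t+1}\overline{\Xt}_T}(y \cmid \overline{\zc}_{t+1}, \overline{\xt}_t, \xt^0_{t+1}, \underline{\xt}^0_{t+2})$, i.e.\ the baseline at time $t+1$, and combining with the $(t+1)$-contrast gives the full parameterization one step later.

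The subtle point---and the main obstacle---is justifying that the marginal of $\Zc_{t+1}$ conditional on $(\overline{\zc}_t, \overline{\xt}_t)$ needed for Theorem \ref{thm:main} coincides in the intervention-baseline world and the observational world, so that we can use the observational conditional $p_{\Zc_{t+1}|\overline{\Zc}_t\overline{\Xt}_t}$ (available from the given $p_{\Zc_t\Xt_t|\overline{\Zc}_{t-1}\overline{\Xt}_{t-1}}$) as the margin in our frugal reconstruction. This is exactly the ``change of worlds'' identity given immediately before the Proposition: future interventions from $\xt_{t+1}$ onwards leave the conditional distribution of $\Zc_{t+1}$ given past unchanged under the identifiability assumptions in force. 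This identity needs to be applied carefully at each time point to make sure we are handling the correct cognate kernel when invoking Theorem \ref{thm:main}.

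For the variation independence claim, I would argue that (i) the conditional pieces $p_{\Zc_t\Xt_t|\overline{\Zc}_{t-1}\overline{\Xt}_{t-1}}$ are variation independent among themselves (standard sequential conditional parameterization); (ii) each $\phi^*_{Y\Zc_{t+1}|\overline{\Zc}_t\overline{\Xt}_t}$ is variation independent of the corresponding margins by Assumption A\ref{ass:vi2}, inherited via Theorem \ref{thm:main}; and (iii) the baseline-contrast pair is variation independent when the contrast is the odds ratio (for general binary contrasts) or the risk ratio with a positive unbounded outcome, by Lemma \ref{lem:basecont}. Composing these three pieces of variation independence across time gives the global conclusion. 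The parameter cut between past and future conditionals, combined with the fact that each contrast only enters the conditional of $Y$, ensures no variation-dependence creeps back in across the recursion.
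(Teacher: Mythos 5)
Your proposal is correct and follows essentially the same route as the paper's proof in Appendix \ref{sec:snmm_pf}: induction over time points, using Lemma \ref{lem:basecont} to combine baseline and contrast into the full conditional at each step, the association parameter $\phi^*_{Y\!\Zc_{t+1}|\overline{\Zc}_t\overline{\Xt}_t}$ to reconstruct the joint with $\Zc_{t+1}$, and the ``change of worlds'' reweighting to move between the observational and baseline-intervention distributions. The only cosmetic difference is that the paper's induction hypothesis is phrased in the observational world with explicit reweighting to the starred world at each stage, whereas you carry the starred quantities through directly; the content is identical.
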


The proof for the special case of binary treatment variables is given in Appendix 
\ref{sec:snmm_pf}.  With this general formulation we do not require
$p^*_{Y|\overline{\Zc}_t\overline{\Xt}_T}$ to be of the
same form for each $t=1,\ldots,T$; this flexibility may be useful for many
settings.  However, we do need the baseline level $\underline{\xt}_{t+1}^0$
to be consistent over all $t$, since otherwise the inductive argument we use 
will not work.  Note also that
$\phi^*_{Y\!\Zc_{T+1}|\overline{\Zc}_T\overline{\Xt}_T}$ is trivial, since $\Zc_{T+1}$
is assumed constant.

Two numerical examples are given as R\ref{exm:snmm} and \ref{exm:snmm2} in 
Appendix \ref{sec:snmm_pf}.


\begin{rmk}
The \emph{History-Adjusted Marginal Structural Models} (HAMSMs)
introduced by \citet{vanderlaan:05} model (the mean of) the distributions
\[
p_{Y|\overline{\Zc}_t\overline{\Xt}_T}(y \cmid \overline{\xt}_{t-1}, \overline{\zc}_{t}; \Do(\underline{\xt}_{t})), \qquad t=1,\ldots,T.
\]
This is similar to the form of a structural nested mean model, but in
this case we attempt to model \emph{all} future treatment regimes simultaneously,
not just at a baseline $\underline{\xt}_{t} = 0$.  This effectively
requires us to model the association between $Y$ and each $\Xt_t$ multiple
times in different margins, and hence we will be using parameters that are
redundant; it therefore does not fall within our frugal framework.  This was pointed
out by \citet{robins:07}, who showed that it is a
non-congenial parameterization, and may lead to incompatible distributions.
\end{rmk}

%
%

\section{Discussion and Conclusion} \label{sec:con}

As we have demonstrated, the principle of a frugal parameterization is widely applicable and useful in many marginal modelling contexts, especially causal models.
We begin this discussion by briefly considering three more key settings for causal models: sensitivity analysis, instrumental variable (IV) analysis and mediation analysis.

In sensitivity analysis, a key challenge is to construct an \emph{augmented model} that
is compatible with the original model in the sense that it shares a marginal
distribution over the observed variables, but can be tweaked to introduce various levels
of unobserved confounding.  This is clearly possible within
our framework; considering Example R\ref{exm:havercroft_sim} in Appendix \ref{sec:vcop}, we can set 
the correlations involving $U$ to zero, and then increase them to test the dependence of our
conclusions to the presence of an unobserved confounder.

In an IV analysis, the instrument is used as an imperfect replacement for randomization when the actual treatment $X$ is affected by unobserved confounding. To formulate a generative IV model we typically want to combine a desired parameterization for $\pYX^*(y \cmid \Do(x))$ with a model that includes the IV and the confounder $U$.
The difficulty, here, is due to the particular properties of an IV which require the joint model to satisfy certain conditional independence properties while being compatible with the marginal causal model.
This is especially problematic for non-collapsible cases, for instance for logistic structural mean models \citep{robinsrotnitzkySMM,logisticSMM,clarkewind:2012} or structural Cox models \citep{coxIV}.  As outlined in Appendix \ref{sec:iv2}, we believe that our approach based on the frugal parameterization can also be helpful in these situations, but we leave details for future work.

In contrast, causal mediation analysis is an example where models contain
singularities and therefore our approach cannot be applied. Decomposing the
effect of a treatment $A$ on outcome $Y$ into the \emph{indirect effect}
via mediator $M$, and the remaining \emph{direct effect},  is  conceptually
the same as splitting $A$ into two
separate nodes $A,A'$, where observationally we always have $A=A'$; mediation
questions may then be considered as asking what would happen if $A \neq A'$
\citep{robins:10}.  The quantities of interest are therefore generally
functions of $p_{Y|AA'}(y \cmid \Do(a,a'))$, but where at the same time
$Y \indep A' \mid A, M$ holds in the full model where the two treatments
are potentially different \citep{didLIDA:19}.
Because this independence requires us to
model the $Y$-$A'$ association within the
joint distribution, not within the $(Y,A,A')$-margin, the only parameters that we are free to specify are then those of the
distribution of $Y$ given each level of $A$ (i.e.\ the strength of
the \emph{direct effect}); this is explicitly possible in the discrete case using results in \citet{evans:15}.
In other cases, attempts to specify both $p^*_{Y|AA'}$ and $p^*_{Y|AA'\!M}$ separately 
may lead to models which are not
compatible; for example, the equations (4) and (5) of \citet{loeys:13} do not
generally give a valid model because the logit function is not closed under marginalization.
\citet{lange:12} avoid the problem of explicitly modelling the joint distribution by
using marginal structural models instead, though their approach does not
allow for simulation from the resulting model.

Another example of nonsmoothness comes from quantities such as $\E_\theta[ Y \cmid \Do(x)] - \E_\theta[Y \cmid x]$, or some other contrast
between these two distributions.\footnote{This is related to (though distinct from) the
parameter used by \citet{hubbard:08} to estimate the effect of giving an entire
population a particular treatment, versus no intervention at all.}
This leads to a parameterization which is degenerate, in the sense that
its derivative (or nonparametric equivalent) is zero in some directions when
the two distributions are the same.

While such nonsmooth models still remain a challenge, we are certain that
marginal models based on a frugal parameterization have many further useful
applications and extensions worth exploring in future work.
For instance, classes of distribution that are
closed under marginalization and conditioning, such as MTP$_2$ distributions \citep{karlin80},  will naturally combine with our approach.
On the technical side, the proposed rejection sampling method can be inefficient,
and it would be
desirable to improve this by using more advanced methods,
along the lines of those suggested by \citet{jacob:20}.

As we noted in Section \ref{sec:existing},  we can see two opposing or
complementary trends in causal modelling:
many approaches are based on specifying structural causal models that implicitly
or explicitly condition on the entire past, and do not consider marginal objects
such as $\pYX(y \cmid \Do(x))$. 
In contrast, our approach is found in the books by \citet[][Chapter 3]{pearl:09},
\citet[][]{imbens:15} and \citet[][]{hernan:20}, which all consider marginal
causal quantities to be fundamental.
Beyond frugal parameterizations, we believe that thinking about
causal models as a form of marginal model, for which there is an older and
richer literature, may lead to many more advances in the field.

\subsection*{Acknowledgements}

We are grateful to Bohao Yao for some early simulations, as well as to
Thomas Richardson, James Robins, Ilya Shpitser, the Associate Editor and 
four anonymous reviewers for their insights and suggestions.  We would also
like to thank Qingyuan Zhao for reading a late draft and providing very insightful
comments and corrections, including the idea about a sensitivity analysis.  Part of
a revision of the manuscript was undertaken while both authors were Visiting 
Scientists at the Simons Institute, Berkeley.  

Section \ref{sec:data} was done as part of the IDEFICS 
Study\footnote{\url{http://www.idefics.eu}}. The data used
in this article cannot be shared publicly due to confidentiality policies agreed with the families
participating in the study. 
We gratefully acknowledge the financial 
support of the European Commission within the Sixth RTD Framework
Programme Contract No. 016181.  
The authors have no conflicts of interest to declare. 

\bibliographystyle{abbrvnat}
\bibliography{mybib}

\begin{thebibliography}{62}
\providecommand{\natexlab}[1]{#1}
\providecommand{\url}[1]{\texttt{#1}}
\expandafter\ifx\csname urlstyle\endcsname\relax
  \providecommand{\doi}[1]{doi: #1}\else
  \providecommand{\doi}{doi: \begingroup \urlstyle{rm}\Url}\fi

\bibitem[Ahrens et~al.(2011)Ahrens, Bammann, Siani, Buchecker, De~Henauw,
  Iacoviello, Hebestreit, Krogh, Lissner, M{\aa}rild, et~al.]{ahrens11idefics}
W.~Ahrens, K.~Bammann, A.~Siani, K.~Buchecker, S.~De~Henauw, L.~Iacoviello,
  A.~Hebestreit, V.~Krogh, L.~Lissner, S.~M{\aa}rild, et~al.
\newblock The {IDEFICS} cohort: design, characteristics and participation in
  the baseline survey.
\newblock \emph{International Journal of Obesity}, 35\penalty0 (1):\penalty0
  S3--S15, 2011.

\bibitem[Barndorff~Nielsen(1978)]{bn:78}
O.~Barndorff~Nielsen.
\newblock \emph{Information and exponential families in statistical theory}.
\newblock Wiley, New York, 1978.

\bibitem[Bedford and Cooke(2002)]{bedford:02}
T.~Bedford and R.~M. Cooke.
\newblock Vines--a new graphical model for dependent random variables.
\newblock \emph{Annals of Statistics}, 30\penalty0 (4):\penalty0 1031--1068, 08
  2002.
\newblock URL \url{https://doi.org/10.1214/aos/1031689016}.

\bibitem[Bergsma and Rudas(2002)]{br02}
W.~Bergsma and T.~Rudas.
\newblock Marginal models for categorical data.
\newblock \emph{Ann. Stat.}, 30\penalty0 (1):\penalty0 140--159, 2002.

\bibitem[Bishop(1967)]{bishop:67}
Y.~Bishop.
\newblock \emph{Multidimensional Contingency Tables: Cell Estimates}.
\newblock PhD thesis, Harvard University, 1967.

\bibitem[Chen(2007)]{chen:07}
H.~Y. Chen.
\newblock A semiparametric odds ratio model for measuring association.
\newblock \emph{Biometrics}, 63\penalty0 (2):\penalty0 413--421, 2007.

\bibitem[Clarke and Windmeijer(2010)]{clarke10identification}
P.~S. Clarke and F.~Windmeijer.
\newblock {Identification of causal effects on binary outcomes using structural
  mean models}.
\newblock \emph{Biostatistics}, 11\penalty0 (4):\penalty0 756--770, 06 2010.
\newblock ISSN 1465-4644.
\newblock \doi{10.1093/biostatistics/kxq024}.
\newblock URL \url{https://doi.org/10.1093/biostatistics/kxq024}.

\bibitem[Clarke and Windmeijer(2012)]{clarkewind:2012}
P.~S. Clarke and F.~Windmeijer.
\newblock Instrumental variable estimators for binary outcomes.
\newblock \emph{Journal of the American Statistical Association}, 107\penalty0
  (500):\penalty0 1638--1652, 2012.
\newblock \doi{10.1080/01621459.2012.734171}.
\newblock URL \url{https://doi.org/10.1080/01621459.2012.734171}.

\bibitem[Clifford(1994)]{clifford94monte}
P.~Clifford.
\newblock Monte carlo methods.
\newblock In J.~Stanford and S.~Vardeman, editors, \emph{Statistical methods
  for Physical Science}, chapter~5, pages 125--153. Academic Press, 1994.

\bibitem[Csisz{\'a}r(1975)]{csiszar:75}
I.~Csisz{\'a}r.
\newblock {I}-divergence geometry of probability distributions and minimization
  problems.
\newblock \emph{Annals of Probability}, 3\penalty0 (1):\penalty0 146--158,
  1975.

\bibitem[Darroch and Ratcliff(1972)]{darroch:72:ipf}
J.~N. Darroch and D.~Ratcliff.
\newblock Generalized iterative scaling for log-linear models.
\newblock \emph{Annals of Mathematical Statistics}, 43\penalty0 (5):\penalty0
  1470--1480, 1972.

\bibitem[Dawid and Didelez(2010)]{dawid_did:10}
A.~P. Dawid and V.~Didelez.
\newblock Identifying the consequences of dynamic treatment strategies: a
  decision-theoretic overview.
\newblock \emph{Statististical Surveys}, 4:\penalty0 184--231, 2010.

\bibitem[Didelez(2019)]{didLIDA:19}
V.~Didelez.
\newblock {Defining causal mediation with a longitudinal mediator and a
  survival outcome}.
\newblock \emph{Lifetime Data Analysis}, 25:\penalty0 593--610, 2019.

\bibitem[Diggle et~al.(2002)Diggle, Heagerty, Liang, and Zeger]{diggle02}
P.~Diggle, P.~Heagerty, K.-Y. Liang, and S.~L. Zeger.
\newblock \emph{Analysis of longitudinal data}.
\newblock Oxford University Press, second edition, 2002.

\bibitem[Drton(2009)]{drton:09}
M.~Drton.
\newblock Likelihood ratio tests and singularities.
\newblock \emph{Annals of Statistics}, 37\penalty0 (2):\penalty0 979--1012,
  2009.

\bibitem[Edwards(1963)]{edwards63association}
A.~W.~F. Edwards.
\newblock The measure of association in a 2$\times$ 2 table.
\newblock \emph{Journal of the Royal Statistical Society, Series A},
  126\penalty0 (1):\penalty0 109--114, 1963.

\bibitem[Evans(2015)]{evans:15}
R.~J. Evans.
\newblock Smoothness of marginal log-linear parameterizations.
\newblock \emph{Electronic Journal of Statistics}, 9\penalty0 (1):\penalty0
  475--491, 2015.

\bibitem[Evans(2021)]{causl:21}
R.~J. Evans.
\newblock causl, May 2021.
\newblock URL \url{https://github.com/rje42/causl}.

\bibitem[Fan et~al.(2017)Fan, Liu, Ning, and Zou]{fan17high}
J.~Fan, H.~Liu, Y.~Ning, and H.~Zou.
\newblock High dimensional semiparametric latent graphical model for mixed
  data.
\newblock \emph{Journal of the Royal Statistical Society: Series B},
  79\penalty0 (2):\penalty0 405--421, 2017.

\bibitem[Ferguson(1996)]{ferguson:96}
T.~S. Ferguson.
\newblock \emph{A course in large sample theory}.
\newblock Chapman and Hall/CRC, 1996.

\bibitem[Havercroft and Didelez(2012)]{havercroft12}
W.~Havercroft and V.~Didelez.
\newblock Simulating from marginal structural models with time-dependent
  confounding.
\newblock \emph{Stat. Med.}, 31\penalty0 (30):\penalty0 4190--4206, 2012.

\bibitem[Hern{\'a}n and Robins(2020)]{hernan:20}
M.~A. Hern{\'a}n and J.~M. Robins.
\newblock \emph{Causal Inference: What If}.
\newblock Boca Raton: Chapman \& Hill/CRC, 2020.

\bibitem[Hubbard and Van~der Laan(2008)]{hubbard:08}
A.~E. Hubbard and M.~J. Van~der Laan.
\newblock Population intervention models in causal inference.
\newblock \emph{Biometrika}, 95\penalty0 (1):\penalty0 35--47, 2008.

\bibitem[H{\"u}ls et~al.(2021)H{\"u}ls, Wright, Bogl, Kaprio, Lissner, Molnar,
  Moreno, DeHenauw, Siani, Veidebaum, Ahrens, Pigeot, and Foraita]{huls2021}
A.~H{\"u}ls, M.~N. Wright, L.~H. Bogl, J.~Kaprio, L.~Lissner, D.~Molnar, L.~A.
  Moreno, S.~DeHenauw, A.~Siani, T.~Veidebaum, W.~Ahrens, I.~Pigeot, and
  R.~Foraita.
\newblock Polygenic risk for obesity and its interaction with lifestyle and
  sociodemographic factors in european children and adolescents.
\newblock \emph{International Journal of Obesity}, 45:\penalty0 1321--1330,
  2021.

\bibitem[Imbens and Rubin(2015)]{imbens:15}
G.~W. Imbens and D.~B. Rubin.
\newblock \emph{Causal Inference for Statistics, Social, and Biomedical
  Sciences}.
\newblock Cambridge University Press, 2015.

\bibitem[Jacob et~al.(2020)Jacob, O’Leary, and Atchad{\'e}]{jacob:20}
P.~E. Jacob, J.~O’Leary, and Y.~F. Atchad{\'e}.
\newblock Unbiased markov chain monte carlo methods with couplings.
\newblock \emph{Journal of the Royal Statistical Society: Series B},
  82\penalty0 (3):\penalty0 543--600, 2020.

\bibitem[Karlin and Rinott(1980)]{karlin80}
S.~Karlin and Y.~Rinott.
\newblock Classes of orderings of measures and related correlation
  inequalities. i. multivariate totally positive distributions.
\newblock \emph{Journal of Multivariate Analysis}, 10\penalty0 (4):\penalty0
  467--498, 1980.

\bibitem[Keogh et~al.(2021)Keogh, Seaman, Gran, and
  Vansteelandt]{keogh21simulating}
R.~H. Keogh, S.~R. Seaman, J.~M. Gran, and S.~Vansteelandt.
\newblock Simulating longitudinal data from marginal structural models using
  the additive hazard model.
\newblock \emph{Biometrical Journal}, 63\penalty0 (7):\penalty0 1526--1541,
  2021.

\bibitem[Lange et~al.(2012)Lange, Vansteelandt, and Bekaert]{lange:12}
T.~Lange, S.~Vansteelandt, and M.~Bekaert.
\newblock A simple unified approach for estimating natural direct and indirect
  effects.
\newblock \emph{American Journal of Epidemiology}, 176\penalty0 (3):\penalty0
  190--195, 2012.

\bibitem[Loeys et~al.(2013)Loeys, Moerkerke, De~Smet, Buysse, Steen, and
  Vansteelandt]{loeys:13}
T.~Loeys, B.~Moerkerke, O.~De~Smet, A.~Buysse, J.~Steen, and S.~Vansteelandt.
\newblock Flexible mediation analysis in the presence of nonlinear relations:
  beyond the mediation formula.
\newblock \emph{Multivariate Behavioral Research}, 48\penalty0 (6):\penalty0
  871--894, 2013.

\bibitem[Martinussen et~al.(2017)Martinussen, Nørbo~Sørensen, and
  Vansteelandt]{coxIV}
T.~Martinussen, D.~Nørbo~Sørensen, and S.~Vansteelandt.
\newblock {Instrumental variables estimation under a structural Cox model}.
\newblock \emph{Biostatistics}, 20\penalty0 (1):\penalty0 65--79, 11 2017.
\newblock ISSN 1465-4644.
\newblock \doi{10.1093/biostatistics/kxx057}.
\newblock URL \url{https://doi.org/10.1093/biostatistics/kxx057}.

\bibitem[Newey(1990)]{newey:90}
W.~K. Newey.
\newblock Semiparametric efficiency bounds.
\newblock \emph{Journal of Applied Econometrics}, 5\penalty0 (2):\penalty0
  99--135, 1990.

\bibitem[N\"{o}hren(2021)]{nohren:msc}
G.~N\"{o}hren.
\newblock Is the causal effect of dietary fiber intake on {BMI} in children
  modified by an inherited susceptibility to obesity?
\newblock Master's thesis, University of Bremen, 2021.

\bibitem[Osius(2009)]{osius09asymptotic}
G.~Osius.
\newblock Asymptotic inference for semiparametric association models.
\newblock \emph{Annals of Statistics}, 37\penalty0 (1):\penalty0 459--489,
  2009.

\bibitem[Pearl(2009)]{pearl:09}
J.~Pearl.
\newblock \emph{Causality: Models, Reasoning and Inference}.
\newblock Cambridge University Press, second edition, 2009.

\bibitem[Peters et~al.(2017)Peters, Janzing, and Sch{\"o}lkopf]{peters:17}
J.~Peters, D.~Janzing, and B.~Sch{\"o}lkopf.
\newblock \emph{Elements of Causal Inference}.
\newblock MIT Press, 2017.

\bibitem[Richardson and Robins(2013)]{richardson13}
T.~S. Richardson and J.~M. Robins.
\newblock {S}ingle {W}orld {I}ntervention {G}raphs ({SWIGs}): A unification of
  the counterfactual and graphical approaches to causality.
\newblock Technical Report 128, CSSS, University of Washington, 2013.

\bibitem[Richardson et~al.(2017)Richardson, Robins, and
  Wang]{richardson:17:oddsproduct}
T.~S. Richardson, J.~M. Robins, and L.~Wang.
\newblock On modeling and estimation for the relative risk and risk difference.
\newblock \emph{Journal of the American Statistical Association}, 112\penalty0
  (519):\penalty0 1121--1130, 2017.

\bibitem[Robert and Casella(2004)]{robert:04}
C.~Robert and G.~Casella.
\newblock \emph{Monte {Carlo} statistical methods}.
\newblock Springer Science \& Business Media, 2004.

\bibitem[Robins and Rotnitzky(2004)]{robinsrotnitzkySMM}
J.~Robins and A.~Rotnitzky.
\newblock {Estimation of treatment effects in randomised trials with
  non-compliance and a dichotomous outcome using structural mean models}.
\newblock \emph{Biometrika}, 91\penalty0 (4):\penalty0 763--783, 12 2004.
\newblock \doi{10.1093/biomet/91.4.763}.
\newblock URL \url{https://doi.org/10.1093/biomet/91.4.763}.

\bibitem[Robins(1986)]{robins86}
J.~M. Robins.
\newblock A new approach to causal inference in mortality studies with a
  sustained exposure period---application to control of the healthy worker
  survivor effect.
\newblock \emph{Mathematical Modelling}, 7\penalty0 (9):\penalty0 1393--1512,
  1986.

\bibitem[Robins(1992)]{robins:92}
J.~M. Robins.
\newblock Estimation of the time-dependent accelerated failure time model in
  the presence of confounding factors.
\newblock \emph{Biometrika}, 79\penalty0 (2):\penalty0 321--334, 1992.

\bibitem[Robins(2000)]{robins:00}
J.~M. Robins.
\newblock Marginal structural models versus structural nested models as tools
  for causal inference.
\newblock In \emph{Statistical models in epidemiology, the environment, and
  clinical trials}, pages 95--133. Springer, 2000.

\bibitem[Robins and Richardson(2010)]{robins:10}
J.~M. Robins and T.~S. Richardson.
\newblock Alternative graphical causal models and the identification of direct
  effects.
\newblock In P.~Shrout, K.~Keyes, and K.~Ornstein, editors, \emph{Causality and
  psychopathology: Finding the determinants of disorders and their cures},
  chapter~6, pages 103--158. Oxford University Press, 2010.

\bibitem[Robins and Tsiatis(1991)]{robins:tsiatis:91}
J.~M. Robins and A.~A. Tsiatis.
\newblock Correcting for non-compliance in randomized trials using rank
  preserving structural failure time models.
\newblock \emph{Communications in Statistics - Theory and Methods}, 20\penalty0
  (8):\penalty0 2609--2631, 1991.

\bibitem[Robins and Wasserman(1997)]{robins:97}
J.~M. Robins and L.~Wasserman.
\newblock Estimation of effects of sequential treatments by reparameterizing
  directed acyclic graphs.
\newblock In \emph{Proceedings of the Thirteenth conference on Uncertainty in
  Artificial Intelligence (UAI-97)}, pages 409--420. Morgan Kaufmann Publishers
  Inc., 1997.

\bibitem[Robins et~al.(2007)Robins, Hern{\'a}n, and Rotnitzky]{robins:07}
J.~M. Robins, M.~A. Hern{\'a}n, and A.~Rotnitzky.
\newblock Invited commentary: effect modification by time-varying covariates.
\newblock \emph{American Journal of Epidemiology}, 166\penalty0 (9):\penalty0
  994--1002, 2007.

\bibitem[Rubin(1974)]{rubin74}
D.~B. Rubin.
\newblock Estimating causal effects of treatments in randomized and
  nonrandomized studies.
\newblock \emph{Journal of Educational Psychology}, 66\penalty0 (5):\penalty0
  688, 1974.

\bibitem[R\"{u}schendorf(1995)]{ruschendorf:95}
L.~R\"{u}schendorf.
\newblock Convergence of the iterative proportional fitting procedure.
\newblock \emph{Annals of Statistics}, 23\penalty0 (4):\penalty0 1160--1174,
  1995.

\bibitem[Saarela et~al.(2015)Saarela, Stephens, Moodie, and Klein]{Saarela:15}
O.~Saarela, D.~A. Stephens, E.~E.~M. Moodie, and M.~B. Klein.
\newblock On {B}ayesian estimation of marginal structural models.
\newblock \emph{Biometrics}, 71\penalty0 (2):\penalty0 279--288, June 2015.

\bibitem[Scharfstein et~al.(1999)Scharfstein, Rotnitzky, and
  Robins]{scharfstein:99}
D.~O. Scharfstein, A.~Rotnitzky, and J.~M. Robins.
\newblock Adjusting for nonignorable drop-out using semiparametric nonresponse
  models.
\newblock \emph{Journal of the American Statistical Association}, 94\penalty0
  (448):\penalty0 1096--1120, 1999.

\bibitem[Sklar(1959)]{sklar:59}
A.~Sklar.
\newblock Fonctions de r\'epartition \`a $n$-dimensions et leurs marges.
\newblock \emph{Publ. Inst. Statistique Univ. Paris}, 8:\penalty0 229--231,
  1959.

\bibitem[Sklar(1973)]{sklar:73}
A.~Sklar.
\newblock Random variables, joint distribution functions, and copulas.
\newblock \emph{Kybernetika}, 9\penalty0 (6):\penalty0 449--460, 1973.

\bibitem[Spirtes et~al.(2000)Spirtes, Glymour, and Scheines]{spirtes00}
P.~Spirtes, C.~Glymour, and R.~Scheines.
\newblock \emph{Causation, prediction, and search}, volume~81.
\newblock MIT Press, second edition, 2000.

\bibitem[Tchetgen~Tchetgen et~al.(2010)Tchetgen~Tchetgen, Robins, and
  Rotnitzky]{tchetgen10}
E.~J. Tchetgen~Tchetgen, J.~M. Robins, and A.~Rotnitzky.
\newblock On doubly robust estimation in a semiparametric odds ratio model.
\newblock \emph{Biometrika}, 97\penalty0 (1):\penalty0 171--180, 2010.

\bibitem[van~der Laan et~al.(2005)van~der Laan, Petersen, and
  Joffe]{vanderlaan:05}
M.~J. van~der Laan, M.~L. Petersen, and M.~M. Joffe.
\newblock History-adjusted marginal structural models and statically-optimal
  dynamic treatment regimens.
\newblock \emph{The International Journal of Biostatistics}, 1\penalty0
  (1):\penalty0 1--41, 2005.

\bibitem[van~der Vaart(1998)]{vandervaart:98}
A.~W. van~der Vaart.
\newblock \emph{Asymptotic statistics}.
\newblock Cambridge University Press, 1998.

\bibitem[Vansteelandt et~al.(2011)Vansteelandt, Bowden, Babanezhad, and
  Goetghebeur]{logisticSMM}
S.~Vansteelandt, J.~Bowden, M.~Babanezhad, and E.~Goetghebeur.
\newblock {On Instrumental Variables Estimation of Causal Odds Ratios}.
\newblock \emph{Statistical Science}, 26\penalty0 (3):\penalty0 403--422, 2011.
\newblock \doi{10.1214/11-STS360}.
\newblock URL \url{https://doi.org/10.1214/11-STS360}.

\bibitem[Wang et~al.(2022)Wang, Meng, Richardson, and Robins]{wang:22}
L.~Wang, X.~Meng, T.~S. Richardson, and J.~M. Robins.
\newblock Coherent modeling of longitudinal causal effects on binary outcomes.
\newblock \emph{Biometrics}, pages 1--13, 2022.
\newblock \doi{https://doi.org/10.1111/biom.13687}.
\newblock URL \url{https://onlinelibrary.wiley.com/doi/abs/10.1111/biom.13687}.

\bibitem[Young and Tchetgen~Tchetgen(2014)]{young:14}
J.~G. Young and E.~J. Tchetgen~Tchetgen.
\newblock {Simulation from a known {C}ox {MSM} using standard parametric models
  for the g-formula}.
\newblock \emph{Statistics in Medicine}, 33\penalty0 (6):\penalty0 1001--1014,
  Mar. 2014.

\bibitem[Young et~al.(2008)Young, Hern{\'a}n, Picciotto, and Robins]{young:08}
J.~G. Young, M.~A. Hern{\'a}n, S.~Picciotto, and J.~M. Robins.
\newblock Simulation from structural survival models under complex time-varying
  data structures.
\newblock \emph{JSM Proceedings, Section on Statistics in Epidemiology, Denver,
  CO: American Statistical Association}, 2008.

\bibitem[Young et~al.(2009)Young, Hern{\'a}n, Picciotto, and Robins]{young:09}
J.~G. Young, M.~A. Hern{\'a}n, S.~Picciotto, and J.~M. Robins.
\newblock Relation between three classes of structural models for the effect of
  a time-varying exposure on survival.
\newblock \emph{Lifetime Data Analysis}, 16\penalty0 (1):\penalty0 71, Nov
  2009.
\newblock ISSN 1572-9249.
\newblock \doi{10.1007/s10985-009-9135-3}.
\newblock URL \url{https://doi.org/10.1007/s10985-009-9135-3}.

\end{thebibliography}


\newpage

\appendix

%

\section{Smoothness, Regularity and Singularity} \label{sec:reg}

The first few definitions in this section are adapted from \citet{newey:90} 
and Chapter 5 of \citet{vandervaart:98}. 
Suppose that we have have a parametric family of distributions $\M = \{p_\theta : \theta \in \Theta \subseteq \mathbb{R}^d\}$, indexed by a parameter $\theta$.

\begin{dfn}
We say that the model $\M$ is \emph{differentiable in
quadratic mean} if there exists a function $\dot\ell(\theta_0)$ such that
as $\theta \to \theta_0$,
\begin{align*}
\int \left[\sqrt{p_\theta} - \sqrt{p_{\theta_0}} - \frac{1}{2}(\theta-\theta_0)^T \dot\ell(\theta_0) \sqrt{p_{\theta_0}} \right] \, d\mu = o(\|\theta - \theta_0\|^2).
\end{align*}
\end{dfn}

If a model is differentiable in quadratic mean we say that the parameterization
induced by $\theta$ is \emph{smooth}.
Now, for almost all statistical models of interest, $\dot\ell$ is
of course the \emph{score function}, that is
\begin{align*}
\dot\ell(\theta) = \frac{\partial}{\partial\theta} \log p_\theta.
\end{align*}
In this case, if the \emph{Fisher information} matrix
\begin{align*}
I(\theta) = \E \dot\ell(\theta)\dot\ell(\theta)^T
\end{align*}
is non-singular, then we also say that the map defined by
$\theta$ is a \emph{regular} parameterization.

We also have related but separate terminology for submodels, which we
adapt from \citet{drton:09}.

\begin{dfn}
Given a \emph{submodel} of $\M$, say $\M' \subseteq \M$, we say that
$\M'$ is \emph{nonsingular} if the induced subset of $\Theta$ is everywhere
locally Euclidean and of constant dimension.  Otherwise the model has
\emph{points of singularity} or \emph{singularities}.
\end{dfn}

An example of a model with singularities would be the union of
the axes $\{(\theta_1,\theta_2): \theta_1 \theta_2 = 0\}$, because this
model is not locally Euclidean at $\theta_1 = \theta_2 = 0$.

\section{Proof of Proposition \ref{prop:snmm}} \label{sec:snmm_pf}

We extend the notion of a risk difference, risk ratio or odds ratio 
to a general outcome variable (but still binary treatment) by writing
\begin{align*}
\operatorname{RD} &:= \pYIZX(y \cmid z, x=1) - \pYIZX(y \cmid z, x=0)\\
\operatorname{RR} &:= \frac{\pYIZX(y \cmid z, x=1)}{\pYIZX(y \cmid z, x=0)}\\
\operatorname{OR} &:= \frac{\pYIZX(y \cmid z, x=1) \cdot \pYIZX(y^0 \cmid z, x=0)}{\pYIZX(y^0 \cmid z, x=1) \cdot \pYIZX(y \cmid z, x=0)}
\end{align*}
for some arbitrary baseline value $y^0$. 
This latter definition is a special case of the one used by \citet{chen:07}.

\begin{lem} \label{lem:basecont}
Suppose $p > 0$, and that $X$ is binary.  Given $\pZX(z,x)$,
$\pYIZ(y \cmid z)$ and $\pYIZX(y \cmid z, \widetilde{x})$, where
$\widetilde{x}$ is contrasted using a risk difference, risk ratio or
an odds ratio, we can smoothly recover $\pYIZX(y \cmid z, x)$.  In 
addition, if we use the risk ratio and the range of $Y > 0$ is unbounded, 
or we use the odds ratio these three pieces will be variation independent.
\end{lem}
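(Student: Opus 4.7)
The plan is to split the argument into two pieces: (i) smooth recovery of $\pYIZX(y \cmid z, x)$ for each $x \in \{0,1\}$ from the given triple, and (ii) variation independence in the two listed cases.

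For recovery, the key observation is that because $X$ is binary the marginal satisfies
\[
\pYIZ(y \cmid z) \;=\; \pi_0(z)\, p_0(y \cmid z) \;+\; \pi_1(z)\, p_1(y \cmid z),
\]
where $\pi_x(z) := p(X=x \cmid z)$ (available from $\pZX$) and $p_x(y \cmid z) := \pYIZX(y \cmid z, X=x)$. This provides one equation in the two unknowns $p_0, p_1$; the contrast supplies the second. For RD ($p_1 - p_0 = \delta$) and RR ($p_1 = r\, p_0$) the resulting system is linear in $(p_0,p_1)$ and solves in closed form, giving smoothness immediately. For OR, I would introduce the scalar $c := p_1(y^0 \cmid z)/p_0(y^0 \cmid z)$, so that $p_1(y) = c\,\psi(y)\,p_0(y)$ for $\psi(y) := \operatorname{OR}(y, y^0 \cmid z)$, yielding
\[
p_0(y) \;=\; \frac{\pYIZ(y \cmid z)}{\pi_0 + \pi_1\, c\,\psi(y)},
\]
with $c$ pinned down by $\int p_0 \, dy = 1$. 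Since $p > 0$, the right-hand side of this scalar equation is continuous, strictly monotone in $c$ on $(0,\infty)$ with limits at $0$ and $\infty$ straddling $1$, so a unique root exists and smooth dependence on the inputs follows from the implicit function theorem.

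For variation independence, the OR case is essentially classical: given any strictly positive $\pYIZ$, propensity $\pi(x \cmid z)$ and positive OR function $\psi$, the recovery in step (i) always produces a valid joint density, which is exactly the standard odds-ratio parameterization result (invoking the references cited before Proposition \ref{prop:disc}). For the RR case with $Y > 0$ on an unbounded range, I would argue similarly using the explicit formula $p_0(y) = \pYIZ(y \cmid z)/[\pi_0 + \pi_1 r(y)]$: the unboundedness and strict positivity of the support ensure that the normalization condition imposes no a priori constraint on the triple $(\pZX, \pYIZ, r)$, because one can always adjust within the admissible cone.

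The main technical obstacle is the RR variation-independence step: making precise the sense in which the unbounded positive support of $Y$ guarantees that the induced normalization constraint places no restriction on the three jointly specified pieces. Apart from this, the proof is essentially a direct calculation combined with citing the classical OR result.
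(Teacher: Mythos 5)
Your argument follows essentially the same route as the paper's. For the risk difference and risk ratio the paper, like you, solves the two-equation system formed by the contrast together with the mixture identity $\pYIZ(y \cmid z) = \sum_x p_{\XIZ}(x \cmid z)\, \pYIZX(y \cmid z, x)$; for the odds ratio the paper simply defers to \citet{chen:07} and to \citet{ruschendorf:95} for variation independence, so your explicit argument---writing $p_1 = c\,\psi\, p_0$, solving $p_0 = \pYIZ/(\pi_0 + \pi_1 c \psi)$, and pinning down $c$ by the normalization together with monotonicity and the implicit function theorem---is a self-contained reconstruction of the cited results rather than a different method, with the bonus that variation independence for the odds-ratio case falls out of the same computation. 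The one step you flag as the main obstacle, variation independence of the risk-ratio contrast for a positive unbounded outcome, is treated no more rigorously in the paper, which asserts only that ``it is clear that we can modify it in any way and still obtain a valid joint distribution.'' Your caution is warranted: your own formula $p_0(y) = \pYIZ(y \cmid z)/(\pi_0 + \pi_1 r(y))$ shows that $\int p_0 \, dy = 1$ is a genuine constraint linking $r$ to the other two pieces when $r$ is a density ratio (a constant ratio $r \equiv 2$ already violates it), so this part of the claim requires either a restriction on the admissible contrasts or a mean-scale reading of the risk ratio; neither your sketch nor the paper's proof resolves this.
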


\begin{proof}
For a risk difference or ratio, it is clear that if $\pZX(z,x)$ and $\pYIZ(y \cmid z)$ are fixed, then
\begin{align*}
\theta_z &= \pYIZX(y \cmid z, x=1) - \pYIZX(y \cmid z, x=0)\\
\theta'_z &= \frac{\pYIZX(y \cmid z, x=1)}{\pYIZX(y \cmid z, x=0)}
\end{align*}
each give a regular representation of $\pYIZX(y \cmid z, x)$ when combined with
\begin{align*}
\pYIZ(y \cmid z) = \sum_{x=0}^1 p_\XIZ(x \cmid z) \pYIZX(y \cmid z, x).
\end{align*}
For the odds ratio we refer to \citet{chen:07} for details.  The variation 
independence of the odds ratio from its margins is well known 
\citep[e.g.][]{ruschendorf:95}.  If $Y$ is unbounded and 
$\pYIZX(y \cmid z, \widetilde{x})$ is the risk-ratio, then it is clear that we can 
modify it in any way and still obtain a valid joint distribution.  
\end{proof}

\begin{proof}[Proof of Proposition \ref{prop:snmm}] 
We consider the special case in which each $\Xt_t$ is binary, and 
proceed by induction on $T$. Note that we can combine all
the conditionals $p_{\Zc_{t}\Xt_t|\overline{\Xt}_{t-1}\overline{\Zc}_{t-1}}$
to obtain the joint distribution $p_{\overline{\Zc}_{T}\overline{\Xt}_T}$.
Now, by a simple adaptation of Theorem \ref{thm:main}, we start with
\begin{align*}
p^*_{Y|\Zc_1\overline{\Xt}_T}(y \cmid \zc_1, \underline{\xt}_1^0) \qquad p^*_{Y|\Zc_1\overline{\Xt}_T}(y \cmid \zc_1, \widetilde{\xt}_{1}, \underline{\xt}_2^0)
\qquad \phi^*_{Y\!\Zc_2|\Zc_1\Xt_1}(y, \zc_2 \cmid \zc_1, \xt_1),
\end{align*}
from which we can recover $p^*_{Y|\Zc_1\overline{\Xt}_T}(y \cmid \zc_1, \xt_1, \underline{\xt}_2^0)$ by Lemma \ref{lem:basecont}.
We can then combine with $p_{\Zc_2|\Zc_1\Xt_1}$ and $\phi^*_{Y\!\Zc_2|\Zc_1\Xt_1}$ to obtain
$p^*_{Y\!\Zc_2|\Zc_1\overline{\Xt}_T}(y,\zc_2 \cmid \zc_1, \xt_1, \underline{\xt}_2^0)$, and consequently (by reweighting)
$p_{Y\!\Zc_2|\Zc_1\overline{\Xt}_T}(y,\zc_2 \cmid \zc_1, \xt_1, \underline{\xt}_2^0)$.

Now, assume for induction that we can recover
$p_{Y|\overline{\Zc}_t\overline{\Xt}_T}(y \cmid \overline{\zc}_t, \overline{\xt}_{t-1}, \underline{\xt}_t^0)$; we have shown this for $t=2$.  We can reweight with some function of $p_{\overline{\Zc}_T \overline{\Xt}_T}$ to obtain $p^*_{Y|\overline{\Zc}_t\overline{\Xt}_T}(y \cmid \overline{\zc}_t, \overline{\xt}_{t-1}, \underline{\xt}_t^0)$,
and then combining with
$p^*_{Y|\overline{\Zc}_t\overline{\Xt}_T}(y \cmid \overline{\zc}_{t}, \overline{\xt}_{t-1}, \widetilde{\xt}_t, \underline{\xt}^0_{t+1})$
and again using Lemma \ref{lem:basecont} we obtain $p^*_{Y|\overline{\Zc}_t\overline{\Xt}_T}(y \cmid \overline{\zc}_t, \overline{\xt}_{t}, \underline{\xt}_{t+1}^0)$.
Then, we can
again use $\phi^*_{Y\!\Zc_{t+1}|\overline{\Zc}_t\overline{\Xt}_t}$ together with $p^*_{Y|\overline{\Zc}_{t}\overline{\Xt}_{T}}$ and $p^*_{\Zc_{t+1}|\overline{\Zc}_{t}\overline{\Xt}_{T}}$ (for $\underline{\Xt}_{t+1} = \underline{\xt}^0_{t+1}$)
to obtain $p^*_{Y\!\Zc_{t+1}|\overline{\Zc}_t\overline{\Xt}_t}$.  
Reweighting again yields an expression for
$p_{Y\!\Zc_{t+1}|\overline{\Zc}_t\overline{\Xt}_T}$ when $\underline{\Xt}_{t+1} = \underline{\xt}^0_{t+1}$, and hence $p_{Y|\overline{\Zc}_{t+1}\overline{\Xt}_T}(y \cmid \overline{\zc}_{t+1}, \overline{\xt}_{t}, \underline{\xt}_{t+1}^0)$.

Hence, by induction, we can obtain $p_{Y|\overline{\Zc}_T\overline{\Xt}_T}$, and consequently $p_{\overline{\Zc}_T\overline{\Xt}_TY}$.  

The results on variation independence follow directly from the implications in Lemma \ref{lem:basecont}.
\end{proof}

\begin{exmp} \label{exm:snmm}
Consider again the model in Figure \ref{fig:mod2}; in this case, we have
$\Xt_1 = A$ and $\Xt_2 = B$, with $\Zc_2 = L$ and $\Zc_1$ being null.
A structural nested mean model would include
\begin{align*}
&p_{\YIAB}(y \cmid \Do(a=b=0))
&&p_{\YIAB}(y \cmid \Do(a=1, b=0))\\
&\text{and }&&p_{\YIALB}(y \cmid a, \ell; \Do(\tilde{b})).
\end{align*}

In order to complete the parameterization we also need $p_{\ALB}$ and
$\phi^*_{Y\!L|A}$; the latter of these could be the conditional odds
ratio in the discrete case, for example.
%
%
%
The advantage of this representation of an SNM is that it makes
absolutely clear which (groups of) parameters are free to be varied.  Indeed,
like the previous examples this `model' is such that any
distribution over $A,L,B,Y$ (or more generally $\overline{\Xt}_T, \overline{\Zc}_T, Y$)
can be represented using this parameterization.

We demonstrate this by constructing a distribution for a structural
nested mean model over this graph.
We take all variables to be
binary, and let the blips be in the form of risk differences:
\begin{align*}
p_{\YIAB}(1 \cmid \Do(a=b=0)) &= 0.2\\
p_{\YIAB}(1 \cmid \Do(a=1, b=0)) - p_{\YIAB}(1 \cmid \Do(a=b=0)) &= 0.1\\
p_{\YIALB}(1 \cmid a, \ell; \Do(b=1)) - p_{\YIALB}(1 \cmid a, \ell; \Do(b=0)) &= 0.1 a + 0.05 \ell.
\end{align*}
Suppose also that $p_{A}(1) = 0.3$, and
\begin{align*}
p_{L|A}(1 \cmid a) &= 0.4 - 0.1 a\\
p_{B | AL}(1 \cmid a, \ell) &= 0.2 + 0.3a + 0.3\ell\\
\log \phi_{Y\!L|A}(1, 1 \cmid a) &= 0.1 + 0.1 a,
\end{align*}
where $\phi_{Y\!L|A}$ is the conditional odds ratio.
The resulting conditional probabilities
$p_{\YIALB}(1 \cmid a,\ell,b)$
are given in Table \ref{tab:snmm}.

\begin{table}
\centering
\begin{tabular}{ccc|c}
\toprule
$a$ & $\ell$ & $b$& $p_{\YIALB}(1 \cmid a,\ell,b)$\\
\midrule
0 & 0 & 0 & 0.194 \\
1 & 0 & 0 & 0.287 \\
0 & 1 & 0 & 0.210 \\
1 & 1 & 0 & 0.330 \\
0 & 0 & 1 & 0.194 \\
1 & 0 & 1 & 0.387 \\
0 & 1 & 1 & 0.260 \\
1 & 1 & 1 & 0.480 \\
\bottomrule
\end{tabular}
\caption{Table giving probability of survival from the SNMM in Example
R\ref{exm:snmm}.}
\label{tab:snmm}
\end{table}
\end{exmp}

\begin{exm} \label{exm:snmm2}

This is an expansion of Example R\ref{exm:snmm} in the notation of Section 7: 
hence $(A,B)$ becomes $(A_1, A_2)$, and $L$ becomes $L_2$.  We also add in 
a `static' covariate $L_1$ that is causally prior to all other variables.
Suppose that $T=2$, all variables are binary, and let the blips be in
the form of risk differences:
\begin{align*}
p_{Y|\Zc_1\overline{\Xt}_2}(1 \cmid \zc_1; \Do(\xt_1=\xt_2=0)) &= 0.2\\
p_{Y|\Zc_1\overline{\Xt}_2}(1 \cmid \zc_1; \Do(\xt_1=1, \xt_2=0)) - p_{Y|\Zc_1\overline{\Xt}_2}(1 \cmid \zc_1; \Do(\xt_1=\xt_2=0)) &= 0.1 + 0.1\zc_1\\
p_{Y|\overline{\Zc}_2\overline{\Xt}_2}(1 \cmid \zc_1, \xt_1, \zc_2; \Do(\xt_2=1)) - p_{Y|\overline{\Zc}_2\overline{\Xt}_2}(1 \cmid \zc_1, \xt_1, \zc_2; \Do(\xt_2=0)) &= 0.05 \zc_1 + 0.05 \zc_2 + 0.1 \xt_1.
\end{align*}
Suppose also that $p_{\Zc_1}(1) = 0.5$ and $p_{\Xt_1| \Zc_1}(1 \cmid \zc_1) = 0.3 + 0.3\zc_1$, and
\begin{align*}
p_{\Zc_2 | \Zc_1\Xt_1}(1 \cmid \zc_1, \xt_1) &= 0.4 + 0.3\zc_1 - 0.1\xt_1 - 0.2 \zc_1\xt_1\\
p_{\Xt_2 | \overline{\Zc}_2\Xt_1}(1 \cmid \zc_1, \zc_2, \xt_1) &= 0.2 + 0.3\xt_1 + 0.3\zc_2\\
\log \phi_{Y\!\Zc_2|\Zc_1\!\Xt_1}(1, 1 \cmid \zc_1, \xt_1) &= 0.1 + 0.1 \Ind_{\{\xt_1 = \zc_1\}},
\end{align*}
where $\phi_{Y\!\Zc_2|\Zc_1\!\Xt_1}$ is the conditional odds ratio.
The resulting conditional probabilities
$p_{Y|\overline{\Zc}_2 \overline{\Xt}_2}(1 \cmid \overline{\zc}_2, \overline{\xt}_2)$ are given in Table \ref{tab:snmm2}.

\begin{table}
\begin{center}
\begin{tabular}{cc|cc||c}
\toprule
$\zc_1$ & $\zc_2$ & $\xt_1$ & $\xt_2$ & $p_{Y|\overline{\Zc}_2 \overline{\Xt}_2}(1 \cmid \overline{\zc}_2, \overline{\xt}_2)$\\
\midrule
0 & 0 & 0 & 0 & 0.187 \\
  1 & 0 & 0 & 0 & 0.189 \\
  0 & 1 & 0 & 0 & 0.219 \\
  1 & 1 & 0 & 0 & 0.205 \\
  0 & 0 & 1 & 0 & 0.294 \\
  1 & 0 & 1 & 0 & 0.381 \\
  0 & 1 & 1 & 0 & 0.315 \\
  1 & 1 & 1 & 0 & 0.429 \\
\midrule
  0 & 0 & 0 & 1 & 0.187 \\
  1 & 0 & 0 & 1 & 0.239 \\
  0 & 1 & 0 & 1 & 0.269 \\
  1 & 1 & 0 & 1 & 0.305 \\
  0 & 0 & 1 & 1 & 0.394 \\
  1 & 0 & 1 & 1 & 0.531 \\
  0 & 1 & 1 & 1 & 0.465 \\
  1 & 1 & 1 & 1 & 0.629 \\
\bottomrule
\end{tabular}
\end{center}
\caption{Table giving probability of survival from the SNMM in Example \ref{exm:snmm2}.}
\label{tab:snmm2}
\end{table}
\end{exm}

\section{Vine Copulas} \label{sec:vcop}

As described in Example \ref{exm:cop}, a copula is a multivariate
CDF with uniform $(0,1)$ margins, and can be obtained from any
continuous parametric multivariate model by transforming each margin
using its univariate CDF.  However, there is a relative dearth of
multivariate families in dimensions greater than two, and this
limits the flexibility of such an approach.  One solution to this
problem has been to use \emph{vine copulas}, which chain together
bivariate families in order to give more flexible representations
of multivariate models.

We do not describe vine copulas in full generality here for the sake of
brevity, see \citet{bedford:02} for details.  Consider a system
of three variables, $U$, $L$ and $Y$.  In the case that $L \indep Y \mid U$,
we can model the joint distribution using two separate copulas, one each
for the $L,U$ margin and the $U,Y$ margin.  Due to the conditional
independence, the conditional quantiles of $L \cmid U$ and $Y \cmid U$
are uniformly distributed and uncorrelated.  It is then possible to
relax the conditional independence constraint, by placing another
copula model on these conditional quantiles.  Crucially, the distributions
of the original bivariate margins remain the same.

Vine copulas also have the nice property that for the second level
and above, parameters are conditional on the values of those at lower
levels; in particular they are variation independent.
As a comparison, the standard parameters of a jointly Gaussian
copula have to yield a positive definite matrix, which
is hard to enforce (other than by using the vine copula approach of
considering partial correlations).  This is particularly useful if
we introduce the treatment or other covariates as modifying the
parameters, since the link functions can be much simpler.

\begin{exmp} \label{exm:havercroft_sim}
We will again apply this to Example R\ref{exm:run} from \citet{havercroft12},
this time including the latent variable $U$.  We use Gaussian
copulas in a vine for the triple $(U,L,Y)$, with $U$-$L$ and $U$-$Y$ correlation parameters $2\expit(1) - 1 \approx 0.462$,
and $L$-$Y$ \emph{partial} correlation parameter $2\expit(0.5) - 1 \approx 0.245$.
We take $L$ and $Y$ to be exponentially distributed with means
\begin{align*}
\E [L\mid A=a] &= \exp(-(0.3 - 0.2a))\\
\E [Y\mid\Do(A=a, B=b)] &= \exp(-( - 0.5 + 0.2a + 0.3b)),
\end{align*}
as well as $A \sim \operatorname{Bernoulli}(\frac{1}{2})$ and $B \mid L=\ell, A=a \sim \operatorname{Bernoulli}(\expit(-0.3+0.4a + 0.3\ell))$; the marginal
distribution of $U$ plays no role, so we simply leave it as uniform.
We simulate a dataset of $n=10^4$ individuals, and again fitting via
IPW we obtain:
\begin{align*}
\hat\beta_0 &= -0.489 \; (0.022) & \hat\beta_a &= 0.202 \; (0.033) & \hat\beta_b &= 0.314 \; (0.029) & \hat\beta_{ab} &= -0.040 \; (0.042).
\end{align*}
Robust standard errors are shown in brackets, and each estimate is
indeed less than one standard error away from its respective nominal
value.  Code to replicate this analysis is contained in the vignette
\verb|Hidden_Variables| of the R package \texttt{causl}.
\end{exmp}

\section{Simulation Example} \label{sec:sim2}

We now apply the approach given in Section \ref{sec:sim_study} to
a single large dataset of size $n=10^4$.
%
Table \ref{tab:comp} shows the results, which this time are the estimates, standard
errors and bias.  We see that
our maximum likelihood method indeed has the jointly smallest standard errors, and 
that for each of the IPW, MLE, and doubly robust approaches 
the estimates are suggestive of consistency.  Only the outcome regression
model fails, and this is unsurprising since it is misspecified.
Code relating to this example is also found in the vignette \texttt{Comparison}
in the R package \texttt{causl}.

\begin{table}
\hspace{-1cm}
\begin{center}
%
\begin{tabular}[t]{lrrrrrrrrrrrr}
\toprule
\multicolumn{1}{c}{ } & \multicolumn{3}{c}{Outcome Regression} & \multicolumn{3}{c}{IP Weighting} & \multicolumn{3}{c}{Double Robust} & \multicolumn{3}{c}{MLE} \\
\cmidrule(l{3pt}r{3pt}){2-4} \cmidrule(l{3pt}r{3pt}){5-7} \cmidrule(l{3pt}r{3pt}){8-10} \cmidrule(l{3pt}r{3pt}){11-13}
  & Est. & SE & Bias & Est. & SE & Bias & Est. & SE & Bias & Est. & SE & Bias\\
\midrule
$\beta_0$ & $-$0.58 & 0.020 & $-$0.076 & $-$0.48 & 0.024 & 0.018 & $-$0.49 & 0.021 & 0.012 & $-$0.49 & 0.019 & 0.007\\
$\beta_a$ & 0.17 & 0.030 & $-$0.030 & 0.20 & 0.036 & $-$0.005 & 0.20 & 0.029 & $-$0.003 & 0.20 & 0.027 & $-$0.001\\
$\beta_b$ & 0.46 & 0.028 & 0.157 & 0.28 & 0.031 & $-$0.020 & 0.29 & 0.028 & $-$0.011 & 0.29 & 0.025 & $-$0.005\\
$\beta_{ab}$ & 0.04 & 0.040 & 0.042 & 0.03 & 0.045 & 0.026 & 0.02 & 0.053 & 0.024 & 0.02 & 0.034 & 0.019\\
\bottomrule
\end{tabular}
\caption{Table giving coefficients from the marginal structural
model via outcome regression (i.e.\ na\"ive regression on $A$ and $B$);
inverse probability weighting (IPW); doubly robust method (DR); and
our maximum likelihood approach (MLE).}
\label{tab:comp}
\end{center}
\end{table}

\section{Data Analysis} \label{sec:data2}

The analysis of \citeauthor{nohren:msc} consisted of using IPW with
a propensity score model based on the logistic regression model that
relates dichotomized fibre intake to
\begin{align*}
\text{country}\cdot\text{sex}\cdot\text{age}\cdot\text{age}^2 + \text{country}\cdot\text{isced} + \text{isced}\cdot \text{age} +
             \text{isced}\cdot\text{MVPA} + \text{vegscore}\cdot\text{AVM}
\end{align*}
as well the intercept and all other subsets of the terms above.  Here
isced is the average parental education level; AVM is the average time spent
with audiovisual media in hours per week; MVPA is the
average moderate-to-vigorous physical exercise performed in minutes per day;
vegscore is the vegetable score.
When we run the same analysis (indeed, the same code) for only the German
children, the results obtained are shown in Table \ref{tab:BIPS2}.



\begin{table}
\begin{center}
\begin{tabular}{cc|cc|cc}
\toprule
param. & coefficient & est. & s.e. & \multicolumn{2}{c}{95\% conf.\ int.} \\
\midrule
$\beta_1$ &  fibre  &  \phantom{$-$}0.331  &  0.247 & $-$0.153 & 0.814\\
$\beta_2$ &  PRS &    \phantom{$-$}0.497  &  0.208 & \phantom{$-$}0.089 & 0.906\\
$\beta_3$ &  PRS:fibre &        $-$0.492  &  0.452 & $-$1.377 & 0.393\\
\bottomrule
\end{tabular}
\caption{Table giving estimated coefficients in the marginal structural
model fitted by \citeauthor{nohren:msc} for effect modification of the
PRS on BMI by fibre intake, when applied to the same subset of the data
that we used.}
\label{tab:BIPS2}
\end{center}
\end{table}

\section{Young and Tchetgen Tchetgen Simulations} \label{sec:young}

The full model of \citeauthor{young:14} involves parameterizing
\begin{align*}
\frac{p_{Y_t|\overline{A}_tY_{t-1}}(1 \cmid \Do(\overline{a}_{t}), Y_{t-1} = 0)
}{p_{Y_t|\overline{A}_tY_{t-1}}(1 \cmid \Do(\overline{0}_{t}), Y_{t-1} = 0)} &= e^{\gamma(t, \overline{a}_t)} = \exp\left(\psi_0 a_t + \psi_1 a_{t-1} + \psi_{01} a_t a_{t-1}  \right). 
\end{align*}

We are also free to specify models for the dependence of each treatment
and the covariates upon previous treatments and covariates, as well as
the association parameters between each $Y_t$ and earlier covariates.  Again,
these can all be different for every $t$, but we follow \citeauthor{young:14}
who use logistic regressions for each variable.  They have
\begin{align*}
\logit p_{A_t|\overline{A}_{t-1}\overline{L}_t Y_{t-1}}(1 \cmid \overline{a}_{t-1}, \overline{\ell}_{t-1}, y_{t-1}=0)  &= \alpha_{*} + \alpha_0 \ell_t\\
\logit p_{L_t|\overline{A}_{t-1}\overline{L}_{t-1}Y_{t}}(1 \cmid \overline{a}_{t-1}, \overline{\ell}_{t-1}, y_{t}=0)  &= \beta_1 a_{t-1}.
\end{align*}
They also use a logistic regression for the distribution of survival given
the treatments and covariates, but we want to parameterize directly in terms
of the $\psi$s.
We therefore define
\begin{align*}
\logit p_{Y_t|\overline{A}_t\overline{L}_tY_{t-1}}(1 \cmid \overline{a}_t, \overline{\ell}_t, y_{t-1}=0) &= \theta_* + \theta_{a0} a_{t} + \theta_{\ell0} \ell_{t} + \theta_{a1} a_{t-1},
\end{align*}
noting that the parameters $\theta_{a0}$ and $\theta_{a1}$ are not actually free, 
because they are a function of the other parameters after specifying 
$\psi_0$, $\psi_1$ and $\psi_{01}$.

\citeauthor{young:14} specify the vectors $\alpha = (0.5,0.5)$, $\beta_1 = -2$ and $\theta=(-7,-0.5,-0.8,0)$ and then use
the g-formula (\ref{eqn:gform_yt}) to compute
the corresponding values of $\psi_0,\psi_1,\psi_{01}$.  We will specify the values
of $\psi$ as well as $\theta_*$ and $\theta_{\ell0}$, and then compute the new values of
other elements of $\theta$.  Note that all of the values of $\psi_0$ used are
very close to $-0.8$, which is a consequence of the rare outcome assumption
made by the original authors. 


Continuing the example from Section \ref{sec:survival},
 we simulate datasets of size $n=10^5$
and a variety of values for $\beta_1$ and $\theta_{a0}$, with $\theta_{\ell0} = -0.8$.

\begin{table}
\centering
\begin{tabular}{r|rr|rr|rr}
\toprule
 \multicolumn{1}{c|}{$\beta_1$} &  \multicolumn{1}{c}{$\theta_{a0}$} &  \multicolumn{1}{c|}{$\operatorname{Bias}(\hat\theta_{a0})$} &  \multicolumn{1}{c}{$\theta_{\ell0}$} &  \multicolumn{1}{c|}{$\operatorname{Bias}(\hat\theta_{\ell0})$} & \multicolumn{1}{c}{$\psi_0$} &  \multicolumn{1}{c}{$\operatorname{Bias}(\tilde\psi_0)$} \\
\midrule
 $-2.0$ & $-2.0$ & $-0.0005$ & $-0.8$ & $0.0023$ & $-0.79955$ & $-0.0079$ \\
 $-0.5$ & $-0.5$ & $0.0004$ & $-0.8$ & $-0.0017$ & $-0.79957$ & $0.0024$ \\
 $0.0$ & $-0.5$ & $-0.0035$ & $-0.8$ & $0.0005$ & $-0.79957$ & $-0.0024$ \\
 $-0.5$ & $0.0$ & $0.0018$ & $-0.8$ & $-0.0003$ & $-0.79950$ & $0.0009$ \\
 $0.5$ & $-2.0$ & $-0.0305$ & $-0.8$ & $0.0022$ & $-0.79955$ & $-0.0041$ \\
 $2.0$ & $-2.0$ & $-0.0574$ & $-0.8$ & $0.0008$ & $-0.79955$ & $-0.0029$ \\
\bottomrule
\end{tabular}
\caption{Table showing bias in estimates from the survival model of \citet{young:14}.
The values given for each parameter are the precise values chosen,
and $\hat\theta$ is the MLE, while $\tilde{\psi}$ is estimated via inverse
probability weighting.
The sample bias in these estimates' mean is shown in the adjacent
column; we performed $N=5\,000$ runs with sample size $n=10^5$.}
\label{tab:young1}
\end{table}

Table \ref{tab:young1} shows the bias that results in maximum likelihood
estimates of $\theta_{a0}$ and estimates of $\psi_0$ via inverse
probability weighting (compare this with Table I of \citealp{young:14}).  We
can see that this is indeed still small, implying that our simulation method works as
expected.

\section{Instrumental Variables} \label{sec:iv2}


One common causal approach, when faced with unobserved confounding, is to use an \emph{instrumental variables} (IV) model, as shown in Figure \ref{fig:iv}.  In this case interest may be in the average causal effect which is a function of the quantity  $\pYX(y \cmid \Do(x))$; other popular IV approaches consider causal estimands such as the `complier causal effect' or the `effect of treatment on the treated' which we do not further address, here. The average causal effect, if everything is linear, can be identified by the ratio $\Cov(Z,Y)/\Cov(Z,X)$.
More challenging is the case where the effect of $X$ on $Y$ is non-linear.

\begin{figure}
\begin{center}
  \begin{tikzpicture}
  [rv/.style={circle, draw, thick, minimum size=6mm, inner sep=0.75mm}, node distance=20mm, >=stealth]
  \pgfsetarrows{latex-latex};
\begin{scope}
  \node[rv]  (1)              {$Z$};
  \node[rv, right of=1] (2) {$X$};
  \node[rv, right of=2, xshift=-10mm, yshift=14mm, color=red] (U) {$U$};
  \node[rv, right of=2] (3) {$Y$};

  \draw[->, very thick, color=blue] (1) -- (2);
  \draw[->, very thick, color=blue] (2) -- (3);
  \draw[->, very thick, color=red] (U) -- (2);
  \draw[->, very thick, color=red] (U) -- (3);
  \end{scope}
  \end{tikzpicture}
 \caption{A representation of the instrumental variables model.}
  \label{fig:iv}
  \end{center}
\end{figure}
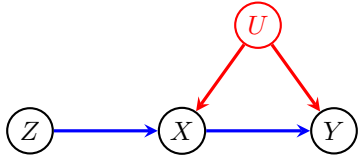

We can use our framework to simulate from the general IV model, by explicitly including the hidden
variable $U$.  We first parameterize the distribution of the `past', i.e.\ $(U,Z,X)$
so that $U \indep Z$;
then we take the distributions $\pYX(y \cmid \Do(x))$ and the association
parameter $\phi^*_{Y,U\!Z|X} = \phi^*_{Y,U|X}$ so as not to depend upon $Z$ at all. This
will allow us to simulate from an IV model, provided that the pieces $p_{U\!Z\!X}$,
$\pYX^*$ and $\phi^*_{YU|X}$ are
chosen from a sufficiently rich family of distributions.

Specifically, suppose that we want to simulate from a particular model from
Figure \ref{fig:iv}, with a specified parametric form for $\pYX^*(y \cmid x)$
(presumably this is $\pYX(y \cmid \Do(x))$).
Then we should use the following algorithm:
\begin{enumerate}
\item select a model $\theta^*_{Y|X}$ for $\pYX^*(y \cmid x)$;
\item choose a distribution for $(U,Z,X)$ such that
$U$ and $Z$ are independent;
\item choose a model for $\phi^*_{Y,U|X} = \phi^*_{Y,U\!Z|X}$
(i.e.\ such that $Y \indep Z \mid X, U$).
\end{enumerate}

Now, combine these to obtain the resulting joint distribution.  In
particular note that even if $Y$ is binary, we can simulate using a 
copula model and then
dichotomize $Y$ from the resulting continuous distribution.  This
works particularly well with a probit or logistic model, for example.

This gives a basic outline of how to represent an instrumental variable model so that we can simulate exactly from (almost\footnote{Since
it must satisfy A\ref{ass:tails}.}) any model of this kind.
To reiterate Section \ref{sec:alg}, we simulate by sampling
from $p^*_{U\!ZY|X}$, and then rejecting samples based on the value
of $p_{X|U\!Z}/p^*_{X|U\!Z}$.
However, further work is needed to extend this to structural mean models for IV analyses. These build on a particular no-effect modification assumption within a marginal (over unobserved confounders) model that is conditional on the natural treatment value and the IV, a restriction which cannot always be represented in a structural equation type model \citep{robinsrotnitzkySMM, clarke10identification}.


%

\end{document}